\tikzset{SEQ/.style={draw,circle,append after command={
        	[shorten >=\pgflinewidth, shorten <=\pgflinewidth,]
        	(\tikzlastnode.north) edge (\tikzlastnode.south)
        	(\tikzlastnode.east) edge (\tikzlastnode.west)
        }
    }
}
\tikzset{XSEL/.style={draw,circle,append after command={
        	[shorten >=\pgflinewidth, shorten <=\pgflinewidth,]
        	node[circle,fill=black,scale=0.2] at (\tikzlastnode) (a) {}        
        }
    }
}
\tikzset{ISEL/.style={draw,circle,append after command={
        	[shorten >=\pgflinewidth, shorten <=\pgflinewidth,]
        	node[circle,draw,scale=0.2] at (\tikzlastnode) (a) {}        
        }
    }
}
\tikzset{PAR/.style={draw,circle,append after command={
			[shorten >=\pgflinewidth, shorten <=\pgflinewidth,]
			($(\tikzlastnode.north west)!0.5!(\tikzlastnode.north)$) edge ($(\tikzlastnode.south west)!0.5!(\tikzlastnode.south)$)
			($(\tikzlastnode.north)!0.5!(\tikzlastnode.north east)$) edge ($(\tikzlastnode.south)!0.5!(\tikzlastnode.south east)$)
        }
    }
}
\tikzset{AGG/.style={draw,circle,append after command={
        	[shorten >=\pgflinewidth, shorten <=\pgflinewidth,]
        	node[diamond,draw,scale=0.2] at (\tikzlastnode) (a) {}        
        }
    }
}
\newcommand{\SEQ}{\tikz{\node[SEQ,scale=0.6]{};}}
\newcommand{\PAR}{\tikz{\node[PAR,scale=0.6]{};}}
\newcommand{\AGG}{\tikz{\node[AGG,scale=0.6]{};}}
\newtheorem{notation}{Notation}
\begin{document}

\title{Towards an Algebra of Computon Spaces}
\author{Damian Arellanes\inst{1}}
\authorrunning{D. Arellanes}

\institute{Lancaster University, Lancaster, United Kingdom \\
\email{damian.arellanes@lancaster.ac.uk}}

\maketitle              

\begin{abstract}
Compositionality is a key property for dealing with complexity, which has been studied from many points of view in diverse fields. Particularly, the composition of individual computations (or programs) has been widely studied almost since the inception of computer science. Unlike existing composition theories, this paper presents an algebraic model not for composing individual programs but for inductively composing spaces of sequential and/or parallel constructs. We particularly describe the semantics of the proposed model and present an abstract example to demonstrate its application. 

\keywords{Computon spaces \and Algebraic composition \and Compositionality.}
\end{abstract}

\section{Introduction}

The Church-Turing thesis states that the intuitive notion of algorithms (or programs) is equivalent to that of a Turing Machine~\cite{sipser_introduction_2013}. The latter is an abstract device that receives an input, performs some computation and produces an output. Functions that can be computed by Turing Machines are called computable~\cite{turing_computable_1937}.

As the class of computable functions is closed under composition, the composition of two computable functions results in a higher-order computable function~\cite{sudkamp_languages_2005}. Equivalently, composing programs $p$ and $q$ results in a more complex program $r$ (known as composite)~\cite{arbab_reo_2004,achermann_calculus_2005,lau_introduction_2017,arellanes_evaluating_2020}. If $r$ defines a sequential composition $q \circ p$, then the computation of $p$ is followed by the computation of $q$. If $r$ defines a parallel composition $p || q$, then the computations of $p$ and $q$ are performed at the same time independently \cite{yanofsky_theoretical_2022}. In any case, the computation of $r$ halts if and only if the computations of $p$ and $q$ also halt \cite{cutland_computability_1980}. 

Rather than providing operators for composing individual programs, this paper presents an algebraic model for inductively composing spaces of computons through so-called composition operators. A computon is a function that defines sequencing or parallelisation, so the result of composition is a space of sequential and/or parallel constructs which can be further composed into higher-order spaces. Our model also provides an operator for reducing spaces which, like composition operators, satisfies totality.

Section~\ref{sec:model} presents the semantics of the proposed model which we refer to as Meronomic. Section~\ref{sec:example} uses Meronomic to describe an abstract example for the construction of higher-order spaces. Finally, Section~\ref{sec:conclusions} outlines the conclusions and describes future directions.

\section{The Meronomic Model}
\label{sec:model}

Meronomic is an algebraic model where computon spaces and composition operators are first-class entities. In this section, we describe its semantics.

\subsection{Semantics of Computon Spaces}
\label{sec:program-spaces-semantics}

A \emph{computon space} is a set of computons which are functions that define sequencing or parallelisation.\footnote{For the rest of the paper, the terms \emph{space} and \emph{computon space} are used interchangeably.} A primitive space is the simplest, indivisible unit of composition which contains a unique computon. Sequential, parallel or aggregated spaces are formed by the composition of multiple spaces and contain a potentially infinite number of computons. Below we present the semantics of computon spaces. For the semantics of composition, see Sections~\ref{sec:semantics-composition-operators}~and~\ref{sec:semantics-higherorder}.

\begin{notation}[Computon Space Universe]
Let $\mathbb{S}$ be the universe of computon spaces, $\mathbb{P}$ be the universe of primitive spaces and $\mathbb{C}$ be the universe of composite spaces such that $\mathbb{P} \subset \mathbb{S}$ and $\mathbb{C} \subset \mathbb{S}$. We denote $\emptyset$ as the \emph{empty space} which is the space with no computons at all.
\end{notation}

\begin{definition}[Primitive Computon]
A \emph{primitive computon} is a function $\{1\} \rightarrow H$ where $H$ is the set of all functions that can be computed by a halting Turing Machine. 
\end{definition} 

\begin{definition}[Primitive Space]
A \emph{primitive space} $S \in \mathbb{P}$ is a singleton set where $p \in S$ a primitive computon.
\end{definition}

\begin{definition}[Sequential Computon] \label{def:sequential-computon}
A \emph{sequential computon} $p$ is a partial function $\mathbb{Z}^+ \rightarrow A$ where $Dom(p) \neq \emptyset$, $A \subset \mathbb{S}$, $|A| \geq 2$ and $S \notin A$. As $p$ defines a strictly increasing (non-empty) sequence, denoted by $(p)_{i \in \mathbb{Z}^+}$,there exists a bijection ${f:[1,|Dom(p)|] \cap \mathbb{Z}^+ \rightarrow Dom(p)}$ given by ${f(x)=x}$. 
\end{definition}

\begin{definition}[Sequential Space]
A \emph{sequential space} $S \in \mathbb{C}$ is a set where each $p \in S$ is a sequential computon.
\end{definition}

\begin{definition}[Parallel Computon]
A \emph{parallel computon} $p$ is a (partial or total) function $A \rightarrow \mathbb{Z}^+$ where $A \subset \mathbb{S}$, $|A| \geq 2$ and $S \notin A$.
\end{definition}

\begin{definition}[Parallel Space]
A \emph{parallel space} $S \in \mathbb{C}$ is a set where each $p \in S$ is a parallel computon.
\end{definition}

\begin{definition}[Aggregated Space]
An \emph{aggregated space} $S \in \mathbb{C}$ is a set where each $p \in S$ is a sequential or a parallel computon.
\end{definition}

\begin{definition}[Computon Space Subsumption] \label{def:space-subsumption}
We say that ${S_2 \in \mathbb{S}}$ is subsumed by $S_1 \in \mathbb{S}$, written $S_2  \sqsubset S_1$, if there exists some computon $p \in S_1$ such that $S_2 \in Dom(p)$ or $S_2 \in Cod(p)$. 
\end{definition}

\begin{remark} \label{rem:computional-subsumption}
Given a computon space $S$, we have that $S \in \mathbb{P} \iff S \neq \emptyset~\land~{\nexists S_i \in \mathbb{S}}$ such that $S_i \sqsubset S$.
\end{remark}

\begin{remark} \label{rem:composite-subsumption}
Given a computon space $S$, we have that $S \in \mathbb{C} \iff {S \neq \emptyset}~\land$\\ ${\exists S_1,S_2 \in \mathbb{S}}$ such that ${S_1,S_2 \sqsubset S}$.
\end{remark}  

\subsection{Semantics of Composition Operators}
\label{sec:semantics-composition-operators}

A composite space is formed by the composition of two or more spaces in a hierarchical bottom-up manner. For this, Meronomic provides three composition operators: (i) sequencer, (ii) paralleliser and (iii) aggregator. As each of them returns a computon space, operators can be composed into more complex ones. Accordingly, Meronomic defines an algebra of computon spaces in which $\mathbb{S}$ is closed under sequencing, parallelisation and aggregation. 

\begin{definition}[Sequencer] \label{def:sequencer}
The sequencer operator $\SEQ$ is a function that takes $n \geq 2$ (non-empty) spaces and produces a sequential space:  
\begin{equation}
  \SEQ: \mathbb{S}^n \rightarrow \mathbb{C}
\end{equation}

More concretely, given a tuple $(S_1,S_2,\ldots,S_n) \in \mathbb{S}^n$, a sequential space $S \in \mathbb{C}$ is given by:\footnote{An n-tuple $(S_1,S_2,\ldots,S_n)$ of computon spaces can be defined as a surjective function ${\{1,2,\ldots,n\} \rightarrow \{S_1,S_2,\ldots,S_n\}}$ so that ${Im(S_1,S_2,\ldots,S_n)=\{S_1,S_2,\ldots,S_n\}}$.}

\begin{align*}
S &= \SEQ(S_1,S_2,\ldots,S_n) = Im(S_1,S_2,\ldots,S_n)^{\mathbb{Z}^+} 
\end{align*}
where each $p \in S$ is a sequential computon, i.e., a partial function from $\mathbb{Z}^+$ to $\{S_1,S_2,\ldots,S_n\}$. 
\end{definition}

\begin{example}
Given $S_1,S_2,S_3,S_4 \in \mathbb{S}$, we can construct the sequential space ${S=\SEQ(S_1,S_2,S_3,S_4) = \{S_1,S_2,S_3,S_4\}^{\mathbb{Z}^+}}$ which can diagramatically be represented as shown in Fig.~\ref{fig:example-sequential-space}.

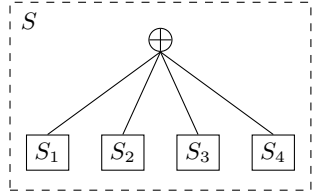
\begin{figure}[h]
\center
\begin{tikzpicture}[level distance=1.5cm,
  level 1/.style={sibling distance=1cm},
  level 2/.style={sibling distance=1cm}]
  \node(seq1)[SEQ,scale=1]{}
    child {node(s1)[draw] {$S_1$}}
    child {node(s2)[draw]{$S_2$}}
    child {node(s3)[draw]{$S_3$}}
    child {node(s4)[draw]{$S_4$}};

\node at ($(seq1)+(-1.75,0.25)$) (s) {$S$}; 
\draw[dashed]($(seq1) + (-2,0.5)$)rectangle($(s4) + (0.5,-0.5)$);
\end{tikzpicture}
\caption{Diagram of the sequential space ${S=\protect\SEQ(S_1,S_2,S_3,S_4)=\{S_1,S_2,S_3,S_4\}^{\mathbb{Z}^+}}$.}
\label{fig:example-sequential-space}
\end{figure} 

By Definition~\ref{def:sequencer}, $S$ contains all the sequential computons over four spaces. If we treat functions as sets, one of such computons is ${\{(1,S_2),(2,S_1),(3,S_4),(4,S_3)\}}$ which defines the sequence $\langle S_2,S_1,S_4,S_3 \rangle$. The diagrammatic interpretation of this computon is illustrated in Fig.~\ref{fig:example-sequential-program}.

\begin{figure}[h]
\center
\begin{tikzpicture}
\node[draw] at (0,0) (s2) {$S_2$};
\node[draw] at (1,0) (s1) {$S_1$};
\node[draw] at (2,0) (s4) {$S_4$};
\node[draw] at (3,0) (s3) {$S_3$};

\draw[->] (s2) -- (s1);
\draw[->] (s1) -- (s4);
\draw[->] (s4) -- (s3);
\end{tikzpicture}
\caption{Diagrammatic interpretation of the sequential computon $\langle S_2,S_1,S_4,S_3 \rangle$ which is a member of the sequential space $\protect\SEQ(S_1,S_2,S_3,S_4)$.}
\label{fig:example-sequential-program}
\end{figure}
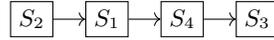
\end{example}

A sequential computon $p \in \SEQ(S_1,S_2,\ldots,S_n)$ can be \emph{bijective}, \emph{injective/non-surjective}, \emph{non-injective/surjective} or \emph{non-injective/non-surjective}. It is injective when $(p)_{i \in \mathbb{Z}^+}$ does not have any repeated spaces from $\{S_1,S_2,\ldots,S_n\}$, and it is surjective when each space from $\{S_1,S_2,\ldots,S_n\}$ appears at least once in $(p)_{i \in \mathbb{Z}^+}$. By combining these properties, we can build the other classes of sequential computons (see Table~\ref{tab:sequential-classes}).

\begin{table*}
\caption{Possible classes for a sequential computon $p: \mathbb{Z}^+ \rightarrow \{S_1,S_2,\ldots,S_n\}$.}
\begin{center}
\begin{tabular}{ |m{14em}|m{20em}| } 
\hline
\textbf{Class of $p$} & \textbf{Description} \\
\hline
Bijective & Each space in $\{S_1,S_2,\ldots,S_n\}$ appears only once in $(p)_{i \in \mathbb{Z}^+}$ \\
\hline
Injective/Non-Surjective & Some spaces in $\{S_1,S_2,\ldots,S_n\}$ appear only once in $(p)_{i \in \mathbb{Z}^+}$ \\
\hline
Non-injective/Surjective & All the spaces in $\{S_1,S_2,\ldots,S_n\}$ appear in $(p)_{i \in \mathbb{Z}^+}$ and at least one of them is repeated \\
\hline
Non-injective/Non-Surjective & Some spaces in $\{S_1,S_2,\ldots,S_n\}$ appear in $(p)_{i \in \mathbb{Z}^+}$ and at least one of them is repeated \\
\hline
\end{tabular}
\end{center}
\label{tab:sequential-classes}
\end{table*}

Moreover, we have that ${|(p)_{i \in \mathbb{Z}^+}|=n}$ when $p$ is bijective, $|(p)_{i \in \mathbb{Z}^+}| < n$ when $p$ is injective/non-surjective, ${|(p)_{i \in \mathbb{Z}^+}| > n}$ when $p$ is non-injective/surjective and $|(p)_{i \in \mathbb{Z}^+}| \geq n$ when $p$ is non-injective/non-surjective.

When $p$ is non-surjective, it is possible to have a sequence of length one or a sequence with at least one absent space. 

\begin{definition} 
Given a sequential computon ${p \in \SEQ(S_1,S_2,\ldots,S_n)}$, we say that ${S_i \in \{S_1,S_2,\ldots,S_n\}}$ is absent from $p$ if $S_i \notin Im(p)$. 
\end{definition}

\begin{proposition}
Any sequential space has the cardinality of the continuum.
\end{proposition}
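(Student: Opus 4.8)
The plan is to show that a sequential space $S$ has cardinality exactly $2^{\aleph_0}$ by exhibiting injections in both directions and appealing to the Cantor--Schr\"oder--Bernstein theorem. I read the proposition as concerning a space produced by the sequencer, so $S = \SEQ(S_1,S_2,\ldots,S_n)$ for some $n \geq 2$; writing $A = Im(S_1,S_2,\ldots,S_n)$, every member of $S$ is a sequential computon whose codomain is the finite set $A$, with $|A| \geq 2$ by Definition~\ref{def:sequential-computon} (so at least two of the arguments, say $S_a \neq S_b$, are distinct). A preliminary observation, from the bijection clause of Definition~\ref{def:sequential-computon}, is that the domain of any sequential computon is an initial segment of $\mathbb{Z}^+$, i.e.\ either $\{1,\ldots,k\}$ for some $k \in \mathbb{Z}^+$ or all of $\mathbb{Z}^+$.

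For the lower bound I would note that every total function $\sigma : \mathbb{Z}^+ \rightarrow \{S_a,S_b\}$ is in particular a total function $\mathbb{Z}^+ \rightarrow A$ with non-empty domain $\mathbb{Z}^+$, and that it trivially satisfies the bijection clause of Definition~\ref{def:sequential-computon} (with $f = \mathrm{id}$); hence $\sigma \in S$. This embeds $\{S_a,S_b\}^{\mathbb{Z}^+}$ into $S$, so $2^{\aleph_0} = |\{S_a,S_b\}^{\mathbb{Z}^+}| \leq |S|$.

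For the upper bound I would identify each computon with its graph: every $p \in S$ is then a subset of the countably infinite set $\mathbb{Z}^+ \times A$, so $S \subseteq \mathcal{P}(\mathbb{Z}^+ \times A)$ and $|S| \leq 2^{|\mathbb{Z}^+ \times A|} = 2^{\aleph_0}$. Cantor--Schr\"oder--Bernstein then gives $|S| = 2^{\aleph_0}$, the cardinality of the continuum. Equivalently, one can count directly: the finite-domain computons form the countable set $\bigcup_{k \geq 1} A^k$, the infinite-domain ones form $A^{\mathbb{Z}^+}$ of cardinality $|A|^{\aleph_0} = 2^{\aleph_0}$ (since $2 \leq |A| < \aleph_0$), and $\aleph_0 + 2^{\aleph_0} = 2^{\aleph_0}$.

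The cardinal arithmetic ($n^{\aleph_0} = 2^{\aleph_0}$ for finite $n \geq 2$, and $\aleph_0 + 2^{\aleph_0} = 2^{\aleph_0}$) is routine. The step I expect to be the main obstacle is not computational but definitional: the bare wording ``a sequential space is a set whose members are sequential computons'' also admits finite (or merely countable) sets, for which the claim fails, so the proof must first make explicit that $S$ ranges over spaces containing \emph{all} sequential computons over a fixed finite alphabet of size at least $2$ --- precisely what the sequencer produces. Pinning down this reading, and checking that the two possibilities for the domain (a finite initial segment versus all of $\mathbb{Z}^+$) neither invalidate the witnesses $\sigma$ used for the lower bound nor push the count above $2^{\aleph_0}$, is where the actual care is needed.
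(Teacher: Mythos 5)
Your proof is correct, and it takes a genuinely different route from the paper's. The paper computes $|A^{\mathbb{Z}^+}|=n^{\aleph_0}$ directly by a ``multiplication principle,'' uses the chain of inclusions $\{S_1,S_2\}^{\mathbb{Z}^+}\subset\cdots\subset A^{\mathbb{Z}^+}$ to get $2^{\aleph_0}\le n^{\aleph_0}$, then defines a map sending an $n$-ary sequence $p$ to the binary sequence recording the parity of the index $j$ with $p(i)=S_j$, calls it a bijection to conclude $n^{\aleph_0}=2^{\aleph_0}$, and finally invokes the continuum hypothesis to restate the answer as $\aleph_1$. You instead sandwich $|S|$ via Cantor--Schr\"oder--Bernstein: a lower bound $2^{\aleph_0}\le|S|$ by embedding the total functions $\mathbb{Z}^+\rightarrow\{S_a,S_b\}$, and an upper bound $|S|\le 2^{\aleph_0}$ by viewing each computon as a subset of the countable set $\mathbb{Z}^+\times A$. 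Your version buys three things. First, it is explicit that sequential computons are partial functions, so $S$ also contains the countably many finite sequences; the paper silently identifies $A^{\mathbb{Z}^+}$ with the set of total functions, and your remark that $\aleph_0+2^{\aleph_0}=2^{\aleph_0}$ closes that gap. Second, it avoids the paper's claimed bijection, which for $n>2$ is only a surjection (e.g.\ the constant sequences at $S_1$ and at $S_3$ have the same image under the parity map), so the paper's key step is, as written, defective in a way your CSB argument repairs. Third, it does not need the continuum hypothesis, which the paper invokes only to rename $2^{\aleph_0}$ as $\aleph_1$; the proposition asks for the cardinality of the continuum, which is $2^{\aleph_0}$ outright. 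Your closing caveat --- that ``sequential space'' must be read as the full output of the sequencer rather than an arbitrary set of sequential computons --- is also well taken and matches the reading the paper's own proof adopts.
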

\begin{proof}

Let $A=\{S_1,S_2,\ldots,S_n\}$ be a finite set of ${n \geq 2}$ computon spaces. By Definition~\ref{def:sequencer}, a sequential space $S$ over $A$ is $A^{\mathbb{Z}^+}$, i.e., the set of all strictly increasing (non-empty) sequences over $A$. Assuming the continuum hypothesis, we now prove that $|S|=|A^{\mathbb{Z}^+}|=\aleph_1$.

Considering that $|\mathbb{Z}^+|=\aleph_0$ and $|A| \geq 2$, we apply the multiplication principle of counting to get the following:

\begin{align}
|\{S_1,S_2\}^{\mathbb{Z}^+}|=2^{\aleph_0} & ~~\text{for}~|A|=2~\text{(i.e., the smallest}~A\text{)} \nonumber \\
|\{S_1,S_2,S_3\}^{\mathbb{Z}^+}|=3^{\aleph_0} & ~~\text{for}~|A|=3 \nonumber \\
\vdots & \nonumber \\
|\{S_1,S_2,S_3,\ldots,S_n\}^{\mathbb{Z}^+}|=n^{\aleph_0} & ~~\text{for}~|A|=n~\text{(i.e., the largest}~A\text{)} \nonumber
\end{align}

The chain of inclusions ${\{S_1,S_2\}^{\mathbb{Z}^+} \subset \{S_1,S_2,S_3\}^{\mathbb{Z}^+} \subset \cdots \subset \{S_1,S_2,S_3,\ldots,S_n\}^{\mathbb{Z}^+}}$ implies that $2^{\aleph_0} \leq 3^{\aleph_0} \leq \cdots \leq n^{\aleph_0}$. To prove that $2^{\aleph_0} = 3^{\aleph_0} = \cdots = n^{\aleph_0}$, we show that there exists a bijection $f:\{S_1,S_2,S_3,\ldots,S_n\}^{\mathbb{Z}^+} \rightarrow \{S_1,S_2\}^{\mathbb{Z}^+}$.

Let $p$ be some sequential computon in $\{S_1,S_2,S_3,\ldots,S_n\}^{\mathbb{Z}^+}$ and $q$ some sequential computon in $\{S_1,S_2\}^{\mathbb{Z}^+}$. A bijection $f$ can be given as follows: for all indices $i \in \mathbb{Z}^+$, $q(i)=S_1$ if $p(i)=S_j$ for some positive odd integer $j$. Conversely, $q(i)=S_2$ if $p(i)=S_k$ for some positive even integer $k$.

Consequently, $|S|=n^{\aleph_0}=2^{\aleph_0}$ for $|A| \geq 2$. By the continuum hypothesis, $2^{\aleph_0}=\aleph_1 \implies |S|=\aleph_1$ for any finite set of at least two spaces.
\end{proof}

\begin{definition}[Paralleliser] \label{def:paralleliser}
The paralleliser operator $\PAR$ is a function that takes $n \geq 2$ (non-empty) spaces and produces a parallel space:  
\begin{equation}
  \PAR: \mathbb{S}^n \rightarrow \mathbb{C}
\end{equation}

More concretely, $S \in \mathbb{C}$ is constructed from a tuple $(S_1,S_2,\ldots,S_n) \in \mathbb{S}^n$ as follows:

\begin{align*}
S &= \PAR(S_1,S_2,\ldots,S_n) = {\mathbb{Z}^+}^{Im(S_1,S_2,\ldots,S_n)}
\end{align*}
where each parallel computon $p \in S$ is a (partial or total) non-surjective function that maps each space in $\{S_1,S_2,\ldots,S_n\}$ to an integer representing the number of parallel instances. 
\end{definition}

\begin{example}
Given $S_1,S_2,S_3,S_4 \in \mathbb{S}$, we can construct the parallel space $S={\PAR(S_1,S_2,S_3,S_4) = {\mathbb{Z}^+}^{\{S_1,S_2,S_3,S_4\}}}$ which can diagramatically be represented as shown in Fig.~\ref{fig:example-parallel-space}.

\begin{figure}[h]
\center
\begin{tikzpicture}[level distance=1.5cm,
  level 1/.style={sibling distance=1cm},
  level 2/.style={sibling distance=1cm}]
  \node(par1)[PAR,scale=1]{}
    child {node(s1)[draw] {$S_1$}}
    child {node(s2)[draw]{$S_2$}}
    child {node(s3)[draw]{$S_3$}}
    child {node(s4)[draw]{$S_4$}};

\node at ($(par1)+(-1.75,0.25)$) (s) {$S$}; 
\draw[dashed]($(par1) + (-2,0.5)$)rectangle($(s4) + (0.5,-0.5)$);
\end{tikzpicture}
\caption{Diagram of the parallel space ${S=\protect\PAR(S_1,S_2,S_3,S_4)={\mathbb{Z}^+}^{\{S_1,S_2,S_3,S_4\}}}$.}
\label{fig:example-parallel-space}
\end{figure}
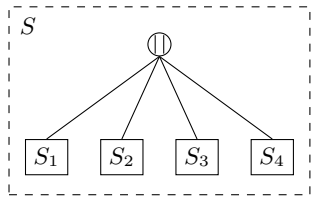 

Treating functions as sets, a parallel computon $p \in S$ can be \\${\{(S_1,2),(S_2,1),(S_3,3),(S_4,1)\}}$ whose interpretation is illustrated in Fig.~\ref{fig:example-parallel-program}.\footnote{For simplicity, for the rest of the paper we do not show the labels for the fork and join constructs. We assume that they are clear from the context.} 

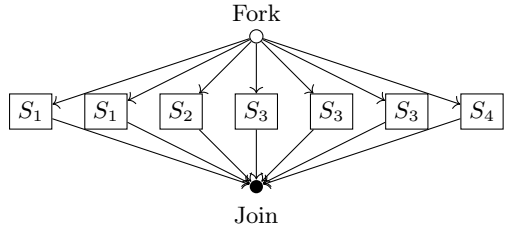
\begin{figure}[h]
\center
\begin{tikzpicture}

\node[circle,draw=black,inner sep=0pt,minimum size=5pt,label=Fork] at (3,2) (fork) {};
\node[draw] at (0,1) (i1) {$S_1$};
\node[draw] at (1,1) (i2) {$S_1$};
\node[draw] at (2,1) (i3) {$S_2$};
\node[draw] at (3,1) (i4) {$S_3$};
\node[draw] at (4,1) (i5) {$S_3$};
\node[draw] at (5,1) (i6) {$S_3$};
\node[draw] at (6,1) (i7) {$S_4$};
\node[circle,fill=black,inner sep=0pt,minimum size=5pt,label={[yshift=-0.7cm]Join}] at (3,0) (join) {};

\draw[->] (fork) -- (i1); \draw[->] (i1) -- (join);
\draw[->] (fork) -- (i2); \draw[->] (i2) -- (join);
\draw[->] (fork) -- (i3); \draw[->] (i3) -- (join);
\draw[->] (fork) -- (i4); \draw[->] (i4) -- (join);
\draw[->] (fork) -- (i5); \draw[->] (i5) -- (join);
\draw[->] (fork) -- (i6); \draw[->] (i6) -- (join);
\draw[->] (fork) -- (i7); \draw[->] (i7) -- (join);
\end{tikzpicture}
\caption{Diagrammatic interpretation of the parallel computon $\{(S_1,2),(S_2,1),(S_3,3),(S_4,1)\}$ which is a member of the space $\protect\PAR(S_1,S_2,S_3,S_4)$.}
\label{fig:example-parallel-program}
\end{figure}
\end{example}

Any parallel computon $p \in \PAR(S_1,S_2,\ldots,S_n)$ parallelises some or all the spaces from $\{S_1,S_2,\ldots,S_n\}$. The former case occurs when $p$ is a \emph{partial function} whereas the latter happens when $p$ is a \emph{total function}. As $p$ is always a function, we have that at least one space from $\{S_1,S_2,\ldots,S_n\}$ is parallelised.

Similarly, if $p$ is \emph{injective} all the parallelised spaces have a different number of parallel instances. Otherwise, at least two spaces have the same number of parallel instances. The classes of parallel computons are described in Table~\ref{tab:parallel-classes}, viz. \emph{partial/injective}, \emph{partial/non-injective}, \emph{total/injective} and \emph{total/non-injective}.

\begin{table}[!h]
\caption{Possible classes for a parallel computon $p:\{S_1,S_2,\ldots,S_n\} \rightarrow \mathbb{Z}^+$.}
\begin{center}
\begin{tabular}{ |m{10em}|m{22em}| } 
\hline
\textbf{Class of $p$} & \textbf{Description} \\
\hline
Partial/Injective & Some spaces in $\{S_1,S_2,\ldots,S_n\}$ are parallelised, each with a different number of parallel instances \\
\hline
Total/Injective & All the spaces in $\{S_1,S_2,\ldots,S_n\}$ are parallelised, each with a different number of parallel instances \\
\hline
Partial/Non-Injective & Some spaces in $\{S_1,S_2,\ldots,S_n\}$ are parallelised and at least two of them have the same number of parallel instances \\
\hline
Total/Non-Injective & All the spaces in $\{S_1,S_2,\ldots,S_n\}$ are parallelised and at least two of them have the same number of parallel instances \\
\hline
\end{tabular}
\end{center}
\label{tab:parallel-classes}
\end{table}

It is important to note that, when $p$ is partial, it is possible to have a parallel computon parallelising only one space or that has at least one absent space. 

\begin{definition} 
Given a parallel computon ${p \in \PAR(S_1,S_2,\ldots,S_n)}$, we say that \\${S_i \in \{S_1,S_2,\ldots,S_n\}}$ is absent from $p$ if $S_i \notin Dom(p)$. 
\end{definition}

\begin{proposition}
Every parallel space is countably infinite.
\end{proposition}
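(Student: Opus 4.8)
The plan is to sandwich the cardinality of a parallel space between $\aleph_0$ and $\aleph_0$. Fix a parallel space $S=\PAR(S_1,S_2,\ldots,S_n)$ and write $A=Im(S_1,S_2,\ldots,S_n)$, which is a finite non-empty set. By Definition~\ref{def:paralleliser}, $S$ consists precisely of the (partial or total) functions from $A$ to $\mathbb{Z}^+$ with non-empty domain; note that each such function is automatically non-surjective, since $A$ is finite while $\mathbb{Z}^+$ is infinite, so the ``non-surjective'' clause imposes no restriction.

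For the upper bound $|S|\le\aleph_0$, I would encode each $p\in S$ by the total function $\hat p\colon A\rightarrow \mathbb{Z}^+\cup\{0\}$ given by $\hat p(S_i)=p(S_i)$ if $S_i\in Dom(p)$ and $\hat p(S_i)=0$ otherwise. The map $p\mapsto\hat p$ is injective, so $|S|$ is at most the number of functions from the finite set $A$ into the countable set $\mathbb{Z}^+\cup\{0\}$, that is, at most $\bigl|(\mathbb{Z}^+\cup\{0\})^{|A|}\bigr|$. Since a finite Cartesian product of countable sets is countable, this yields $|S|\le\aleph_0$. (Equivalently, one may note that $S$ is in bijection with $\bigcup_{\emptyset\neq D\subseteq A}(\mathbb{Z}^+)^{D}$, a finite union of countable sets.)

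For the lower bound $|S|\ge\aleph_0$, I would exhibit a countably infinite subset of $S$: since $A$ is non-empty, pick some $S_i\in A$ and consider the family of singleton partial functions $\bigl\{\,\{(S_i,k)\}\;:\;k\in\mathbb{Z}^+\,\bigr\}$. These are pairwise distinct elements of $S$ (each has non-empty domain and codomain $\mathbb{Z}^+$, hence lies in $S$) and are indexed bijectively by $\mathbb{Z}^+$, so $|S|\ge\aleph_0$.

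Combining the two bounds by the Cantor--Schr\"oder--Bernstein theorem gives $|S|=\aleph_0$, i.e.\ every parallel space is countably infinite. I do not expect a genuine obstacle here; the only points needing care are (i) recording that the non-surjectivity requirement is vacuous for a finite domain and therefore does not shrink the set, and (ii) counting partial (not just total) functions correctly, which the encoding $p\mapsto\hat p$ — or the decomposition of $S$ according to the domain of $p$ — handles cleanly.
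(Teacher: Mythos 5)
Your proof is correct and rests on the same core fact as the paper's---that the collection of functions from a finite set of $n\geq 2$ spaces into $\mathbb{Z}^+$ has cardinality $\aleph_0$---but it is structured differently and is more careful on one point. The paper argues directly: by the multiplication principle, $|{\mathbb{Z}^+}^A|={\aleph_0}^n$, and ${\aleph_0}^n=\aleph_0$ for finite $n>0$, done. You instead sandwich $|S|$ between two bounds and invoke Cantor--Schr\"oder--Bernstein: an injection $p\mapsto\hat p$ into $(\mathbb{Z}^+\cup\{0\})^{|A|}$ for the upper bound, and the family of singleton computons $\{(S_i,k)\}$ for the lower bound. The substantive difference is that the paper's computation ${\aleph_0}^n$ literally counts only \emph{total} functions, whereas Definition~\ref{def:paralleliser} admits partial ones; your encoding of partiality via a sentinel value $0$ (equivalently, the decomposition $\bigcup_{\emptyset\neq D\subseteq A}(\mathbb{Z}^+)^D$ as a finite union of countable sets) closes that small gap explicitly. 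Your remark that the ``non-surjective'' clause is vacuous for a finite domain is also a point the paper leaves implicit. The paper's argument is shorter; yours is more self-contained and makes the lower bound explicit rather than relying on the exact cardinal arithmetic identity.
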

\begin{proof}
Let $A=\{S_1,S_2\ldots,S_n\}$ be a finite set of ${n \geq 2}$ spaces. By Definition~\ref{def:paralleliser}, a parallel space $S$ over $A$ is ${\mathbb{Z}^+}^A$, i.e., the set of all functions $A \rightarrow \mathbb{Z}^+$. We now prove that $|S|=|{\mathbb{Z}^+}^A|=\aleph_0$.

Considering that $|\mathbb{Z}^+|=\aleph_0$ and $n \geq 2$, by the multiplication principle of counting we have that ${|S|=|{\mathbb{Z}^+}^A|={\aleph_0}^n}$. As ${\aleph_0}^n=\aleph_0$ for $n > 0$, we have that ${|S|=|{\mathbb{Z}^+}^A|={\aleph_0}^n=\aleph_0}$ for any finite set of at least two spaces.
\end{proof}

\begin{definition}[Aggregator] \label{def:aggregator}
The aggregator operator $\AGG$ is a function that takes $n \geq 2$ spaces and produces an aggregated space:  
\begin{equation}
  \AGG: \mathbb{S}^n \rightarrow \mathbb{C}
\end{equation}

More concretely, given a tuple $(S_1,S_2,\ldots,S_n) \in \mathbb{S}^n$, an aggregated space $S \in \mathbb{C}$ is given as follows:

\begin{align*}
S &= \AGG(S_1,S_2,\ldots,S_n) = \bigcup\limits_{1 \leq i \leq n} S_i
\end{align*}
where $S_i \in Im(S_1,S_2,\ldots,S_n)$.
\end{definition}

\begin{example}
Fig.~\ref{fig:example-aggregated-space} shows the diagrammatic representation of the aggregated space $S={\AGG(S_1,S_2,S_3,S_4) = S_1 \cup S_2 \cup S_3 \cup S_4}$ where $S \in \mathbb{C}$ and ${S_1,S_2,S_3,S_4 \in \mathbb{S}}$.

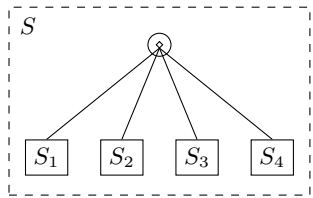
\begin{figure}[h]
\center
\begin{tikzpicture}[level distance=1.5cm,
  level 1/.style={sibling distance=1cm},
  level 2/.style={sibling distance=1cm}]
  \node(agg1)[AGG,scale=1]{}
    child {node(s1)[draw] {$S_1$}}
    child {node(s2)[draw]{$S_2$}}
    child {node(s3)[draw]{$S_3$}}
    child {node(s4)[draw]{$S_4$}};

\node at ($(agg1)+(-1.75,0.25)$) (s) {$S$}; 
\draw[dashed]($(agg1) + (-2,0.5)$)rectangle($(s4) + (0.5,-0.5)$);
\end{tikzpicture}
\caption{Diagram of the aggregated space ${S=\protect\AGG(S_1,S_2,S_3,S_4)=S_1 \cup S_2 \cup S_3 \cup S_4}$.}
\label{fig:example-aggregated-space}
\end{figure} 

Unlike the sequencer and the paralleliser operator, the aggregator does not define a space of new computons, but just an aggregation of existing ones. For instance, in our example, all the computons in $S_i$ are also in $S$ for all $1 \leq i \leq 4$.
\end{example}

\subsection{Semantics of Higher-Order Composition}
\label{sec:semantics-higherorder}

Computon spaces are compositional because every composition operator returns a space that can be further composed inductively (i.e., in a hierarchical bottom-up manner). Thus, compositions are multi-level hierarchical structures in which primitive spaces always lie at the bottom. 

\begin{notation} [Computon Space Order]
The order of a space $S$, denoted by $\Theta(S)$, defines the level of $S$ in a hierarchical composition structure. 
\end{notation}

\begin{remark}
Every primitive space is a $0$-order space: $\Theta(S)=0 \iff S \in \mathbb{P}$. 
\end{remark}

\begin{definition}[First-Order Space]
We say that a computon space $S$ is a first-order space if all the spaces it subsumes are primitive: $\Theta(S)=1 \iff {S_1,S_2,\ldots,S_n \sqsubset S}$ and $S_i \in \mathbb{P}$ for all $1 \leq i \leq n$ with $n \geq 2$.
\end{definition}

\begin{definition}[K-Order Space]
For $k \geq 1$, a computon space $S$ is a $k$-order space if it subsumes at least one space of $k-1$ order and does not subsume spaces of order equal or greater than $k$. More precisely, we have that ${\Theta(S)=k} \iff {\exists S_i \in \mathbb{S}, S_i \sqsubset S~\land~\Theta(S_i)=k-1}$ and ${\nexists S_j \in \mathbb{S},} {S_j \sqsubset S}~\land~\Theta(S_j) \geq k$.
\end{definition}

\begin{example} [Second-Order Space]
Let $S_1,S_2,S_3,S_4,S_5 \in \mathbb{P}$ and $S_6 \in \mathbb{C}$ be the first-order sequential space resulting from the operation $\SEQ(S_1,S_2)$. Diagramatically, $S_6$ can be represented as shown in Fig.~\ref{fig:sequential-s6}.

\begin{figure}[!h]
\center
\begin{tikzpicture}[level distance=1.5cm,
  level 1/.style={sibling distance=1cm},
  level 2/.style={sibling distance=1cm}]
  \node(seq1)[SEQ,scale=1]{}
    child {node(s1)[draw] {$S_1$}}
    child {node(s2)[draw]{$S_2$}};

\node at ($(seq1)+(-0.75,0.25)$) (s6) {$S_6$}; 
\draw[dashed]($(seq1) + (-1,0.5)$)rectangle($(s2) + (0.5,-0.5)$);
\end{tikzpicture}
\caption{Diagram of the first-order sequential space ${S_6=\protect\SEQ(S_1,S_2)}$.}
\label{fig:sequential-s6}
\end{figure}
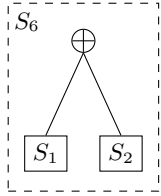

Likewise, we can construct the first-order aggregated space $S_7$ through the operation $\AGG(S_3,S_4)$ to yield the diagram depicted in Fig.~\ref{fig:aggregated-s7}.

\begin{figure}[!h]
\center
\begin{tikzpicture}[level distance=1.5cm,
  level 1/.style={sibling distance=1cm},
  level 2/.style={sibling distance=1cm}]
  \node(agg1)[AGG,scale=1]{}
    child {node(s3)[draw] {$S_3$}}
    child {node(s4)[draw]{$S_4$}};

\node at ($(agg1)+(-0.75,0.25)$) (s7) {$S_7$}; 
\draw[dashed]($(agg1) + (-1,0.5)$)rectangle($(s4) + (0.5,-0.5)$);
\end{tikzpicture}
\caption{Diagram of the first-order aggregated space ${S_7=\protect\AGG(S_3,S_4)}$.}
\label{fig:aggregated-s7}
\end{figure}
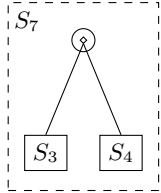

To further compose $S_5$, $S_6$ and $S_7$ into the second-order parallel space $S_8$, we apply the operation $\PAR(S_5,S_6,S_7)$. The diagram of $S_8$ is shown in Fig.~\ref{fig:parallel-s8}. 

\begin{figure}[!h]
\center
\begin{tikzpicture}[level distance=1.5cm,
  level 1/.style={sibling distance=1cm},
  level 2/.style={sibling distance=1cm}]
  \node(par1)[PAR,scale=1]{}
    child {node(s5)[draw] {$S_5$}}
    child {node(s6)[draw,dashed]{$S_6$}}
    child {node(s7)[draw,dashed]{$S_7$}};

\node at ($(par1)+(-1.25,0.25)$) (s8) {$S_8$};
\draw[dashed]($(par1) + (-1.5,0.5)$)rectangle($(s7) + (0.5,-0.5)$);
\end{tikzpicture}
\caption{Diagram of the second-order parallel space ${S_8=\protect\PAR(S_5,S_6,S_7)}$ where $S_6$ and $S_7$ are first-order spaces (see Figs.~\ref{fig:sequential-s6}~and~\ref{fig:aggregated-s7}, respectively).}
\label{fig:parallel-s8}
\end{figure}
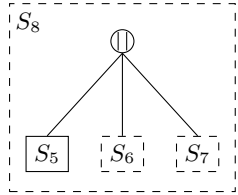

If we expand the composites $S_6$ and $S_7$, we can explicitly observe that $S_8$ is the second-order composite $\PAR(S_5,\SEQ(S_1,S_2),\AGG(S_3,S_4))$ whose complete structure is illustrated in Fig.~\ref{fig:parallel-s8-expanded}. More precisely:

\begin{figure}[!h]
\center
\begin{tikzpicture}[level distance=1.5cm,
  level 1/.style={sibling distance=2.5cm},
  level 2/.style={sibling distance=1cm}]
  \node(par1)[PAR]{}
	child {node(s5)[draw]{$S_5$}
    }        
    child {node(seq1)[SEQ,scale=1] {}
      child {node(s1)[draw]{$S_1$}}
      child {node(s2)[draw]{$S_2$}}
    }
    child {node(agg1)[AGG,scale=1] {}
      child {node(s3)[draw]{$S_3$}}
      child {node(s4)[draw]{$S_4$}}
    };

\node at ($(seq1)+(-0.75,0.25)$) (s6) {$S_6$};
\draw[dashed]($(seq1) + (-1,0.5)$)rectangle($(s2) + (0.5,-0.5)$);

\node at ($(agg1)+(-0.75,0.25)$) (s7) {$S_7$};
\draw[dashed]($(agg1) + (-1,0.5)$)rectangle($(s4) + (0.5,-0.5)$);

\node at ($(par1)+(-2.75,0.25)$) (s8) {$S_8$};
\draw[dashed]($(par1) + (-3,0.5)$)rectangle($(s4) + (0.75,-0.75)$);
\end{tikzpicture}
\caption{Expanded diagram of the second-order parallel space $S_8=\protect\PAR(S_5,S_6,S_7)$.}
\label{fig:parallel-s8-expanded}
\end{figure}
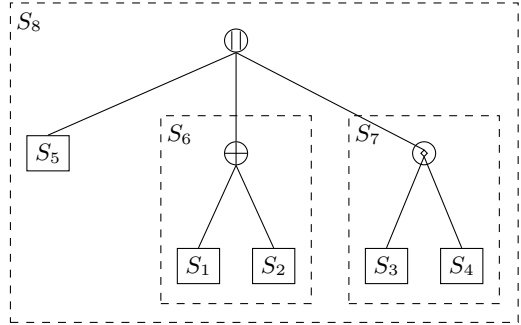

\begin{equation}
S_8=\PAR(S_5,S_6,S_7)=\PAR(S_5,\SEQ(S_1,S_2),\AGG(S_3,S_4))
\end{equation}

\end{example}

\subsection{Properties of Composition Operators}

In the Meronomic model, all composition operators satisfy totality (i.e., the closure axiom) since $\mathbb{C} \subset \mathbb{S}$. We now show that the sequencer and the paralleliser operators are commutative only, whereas the universe $\mathbb{S}$ under aggregation forms a commutative multiary monoid $(\mathbb{S},\AGG)$. Table~\ref{tab:operators-properties} summarises the supported properties by the proposed composition operators.

\begin{table*} 
\caption{Properties of Composition Operators.}
\begin{center}
\begin{tabular}{ |m{6em}|m{5em}|m{10em}|m{15.3em}| }  
\hline
 & \textbf{Identity} & \textbf{Commutativity} & \textbf{Associativity} \\
\hline
\textbf{Sequencer} & No & $\SEQ(S_1,S_2) = \SEQ(S_2,S_1)$ & No \\
\hline
\textbf{Paralleliser} & No & $\PAR(S_1,S_2) = \PAR(S_2,S_1)$ & No \\
\hline
\textbf{Aggregator} & $\AGG(S,\emptyset) = S$ & $\AGG(S_1,S_2) = \AGG(S_2,S_1)$ & $\AGG(S_1,\AGG(S_2,S_3))=\AGG(\AGG(S_1,S_2),S_3)$ \\
\hline
\end{tabular}
\end{center}
\label{tab:operators-properties}
\end{table*}

\subsubsection{Identity.}

The aggregator operator is the only one that satisfies identity. 

\begin{proposition}
For any space $S \in \mathbb{S}$, we have that $\AGG(S,\emptyset) = S$.
\end{proposition}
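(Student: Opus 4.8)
The statement to prove is that $\AGG(S,\emptyset) = S$ for any space $S \in \mathbb{S}$, using the definition $\AGG(S_1,\ldots,S_n) = \bigcup_{1 \le i \le n} S_i$.

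The plan is to unfold the definition of the aggregator and reduce the claim to a set-theoretic identity about unions with the empty set. First I would write $\AGG(S,\emptyset) = S \cup \emptyset$ directly from Definition~\ref{def:aggregator} with $n=2$, $S_1 = S$ and $S_2 = \emptyset$. Here $\emptyset$ is the empty space, which by the Computon Space Universe notation is the space containing no computons at all; treating spaces as sets of computons, $\emptyset$ is literally the empty set. Then $S \cup \emptyset = S$ is the standard identity law for set union, so the result follows immediately.

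The one subtlety worth spelling out — and what I expect to be the only real obstacle — is justifying that $\emptyset$ is an admissible argument to $\AGG$ and that the result $S$ lies in $\mathbb{S}$ so that the equation is even well-typed. Unlike the sequencer and paralleliser, whose signatures in Definitions~\ref{def:sequencer} and~\ref{def:paralleliser} require non-empty spaces, the aggregator in Definition~\ref{def:aggregator} is stated for $n \geq 2$ spaces without the non-emptiness restriction, so feeding in $\emptyset$ is legitimate. For the codomain, note that if $S \in \mathbb{P}$ then $S \cup \emptyset = S \in \mathbb{P} \subset \mathbb{S}$, if $S \in \mathbb{C}$ then $S \cup \emptyset = S \in \mathbb{C} \subset \mathbb{S}$, and if $S = \emptyset$ then $S \cup \emptyset = \emptyset \in \mathbb{S}$; in every case membership in $\mathbb{S}$ is preserved, so $\emptyset$ acts as a genuine (two-sided, by the commutativity entry in Table~\ref{tab:operators-properties}) identity element. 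I would close by remarking that this is exactly what is needed for $(\mathbb{S}, \AGG)$ to be a monoid with identity $\emptyset$, consistent with the claim made in the surrounding text.
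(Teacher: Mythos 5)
Your proof is correct and follows essentially the same route as the paper's: unfold Definition~\ref{def:aggregator} to get $\AGG(S,\emptyset) = S \cup \emptyset$ and conclude by the identity law for set union. The paper's own proof is just the one-line observation that the union of any set with the empty set is the set itself; your additional remarks on admissibility of $\emptyset$ as an operand and closure in $\mathbb{S}$ are reasonable extra care but do not change the argument.
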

\begin{proof}
The union of any set with the empty set is the set itself.
\end{proof}

\subsubsection{Commutativity.}

This property is satisfied by all the composition operators, which entails that the order of space operands is unimportant when constructing higher-order spaces.

\begin{proposition}
Space sequencing is commutative. 
\end{proposition}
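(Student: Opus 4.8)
The plan is to observe that, by Definition~\ref{def:sequencer}, the sequencer depends on its arguments only through the image set $Im(S_1,S_2,\ldots,S_n)$, and that this image set is insensitive to the order in which the spaces are listed. Hence the whole proof reduces to the elementary set-theoretic fact that $\{S_1,S_2\}=\{S_2,S_1\}$.

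Concretely, first I would unfold both sides using Definition~\ref{def:sequencer}: $\SEQ(S_1,S_2)=Im(S_1,S_2)^{\mathbb{Z}^+}$ and $\SEQ(S_2,S_1)=Im(S_2,S_1)^{\mathbb{Z}^+}$. Then, using the characterisation (in the footnote of Definition~\ref{def:sequencer}) of an $n$-tuple as a surjective function onto the set of its entries, I get $Im(S_1,S_2)=\{S_1,S_2\}$ and $Im(S_2,S_1)=\{S_2,S_1\}$, and by extensionality of sets these coincide. Finally, since the construction $A^{\mathbb{Z}^+}$ — the set of all strictly increasing (non-empty) sequences over $A$, equivalently all sequential computons with codomain $A$ — is a function of the set $A$ alone, equal sets $A$ yield literally the same set of computons, so $\SEQ(S_1,S_2)=\SEQ(S_2,S_1)$.

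There is no real obstacle here; the only point requiring a word of care is the justification that equal operands produce the \emph{same} space (as a set of computons), which holds because each $p\in\SEQ(S_1,S_2)$ is by definition a partial function $\mathbb{Z}^+\to\{S_1,S_2\}$, and the collection of all such partial functions is determined entirely by the codomain set $\{S_1,S_2\}$, which is symmetric in $S_1$ and $S_2$. If desired, the identical argument establishes the stronger statement that $\SEQ$ is invariant under every permutation of its $n$ arguments, with the asserted two-argument commutativity being the special case $n=2$.
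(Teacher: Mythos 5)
Your proposal is correct and follows essentially the same route as the paper's own proof: both unfold Definition~\ref{def:sequencer}, observe that $Im(S_1,S_2)=\{S_1,S_2\}=\{S_2,S_1\}=Im(S_2,S_1)$, and conclude that the two sequential spaces are literally the same set of partial functions $\mathbb{Z}^+\to\{S_1,S_2\}$. Your added remark that the argument yields invariance under arbitrary permutations of the $n$ operands is a harmless (and true) strengthening, but the core argument is identical.
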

\begin{proof}
Let $S_1,S_2 \in \mathbb{S}$. By Definition~\ref{def:sequencer}, $p \in \SEQ(S_1,S_2) \iff p \in Im(S_1,S_2)^{\mathbb{Z}^+} \iff p \in \{S_1,S_2\}^{\mathbb{Z}^+}$. Likewise, ${p \in \SEQ(S_2,S_1)} \iff p \in Im(S_2,S_1)^{\mathbb{Z}^+} \iff {p \in \{S_1,S_2\}^{\mathbb{Z}^+}}$. Thus, $\SEQ(S_1,S_2) = \{S_1,S_2\}^{\mathbb{Z}^+} = \SEQ(S_2,S_1)$.
\end{proof}

\begin{proposition}
Space parallelisation is commutative. 
\end{proposition}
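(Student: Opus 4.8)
The plan is to mirror, essentially verbatim, the proof just given for commutativity of sequencing, since the paralleliser is defined in an entirely symmetric way. Recall that by Definition~\ref{def:paralleliser} we have $\PAR(S_1,S_2) = {\mathbb{Z}^+}^{Im(S_1,S_2)}$, and the key observation is simply that the image of an $n$-tuple is a \emph{set}, so it does not depend on the order in which the entries are listed. First I would invoke the footnote interpretation of tuples as surjective functions $\{1,2\} \rightarrow \{S_1,S_2\}$, from which $Im(S_1,S_2) = \{S_1,S_2\} = \{S_2,S_1\} = Im(S_2,S_1)$ follows immediately by the extensionality of sets.

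With that identity in hand, the argument is a short chain of biconditionals. I would write: let $S_1,S_2 \in \mathbb{S}$; then $p \in \PAR(S_1,S_2) \iff p \in {\mathbb{Z}^+}^{Im(S_1,S_2)} \iff p \in {\mathbb{Z}^+}^{\{S_1,S_2\}}$, and likewise $p \in \PAR(S_2,S_1) \iff p \in {\mathbb{Z}^+}^{Im(S_2,S_1)} \iff p \in {\mathbb{Z}^+}^{\{S_1,S_2\}}$. Hence $\PAR(S_1,S_2) = {\mathbb{Z}^+}^{\{S_1,S_2\}} = \PAR(S_2,S_1)$. No appeal to the classification of parallel computons or to cardinality is needed; the statement is purely about equality of the two function spaces as sets.

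Honestly, there is no real obstacle here — the only thing to be careful about is keeping the exponent notation straight, since for the paralleliser the base is $\mathbb{Z}^+$ and the exponent is the set of operand spaces (the reverse of the sequencer convention), so one must not accidentally reproduce the sequencer's $Im(\cdots)^{\mathbb{Z}^+}$ form. If one wanted to be maximally explicit one could also note that the binary case generalises verbatim to $\PAR(S_1,\ldots,S_n)$ being invariant under any permutation of its arguments, because $Im$ of a tuple is permutation-invariant, but for the stated proposition the two-argument version suffices.

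\begin{proof}
Let $S_1,S_2 \in \mathbb{S}$. Since the image of a tuple is a set, we have $Im(S_1,S_2) = \{S_1,S_2\} = \{S_2,S_1\} = Im(S_2,S_1)$. By Definition~\ref{def:paralleliser}, $p \in \PAR(S_1,S_2) \iff p \in {\mathbb{Z}^+}^{Im(S_1,S_2)} \iff p \in {\mathbb{Z}^+}^{\{S_1,S_2\}}$. Likewise, $p \in \PAR(S_2,S_1) \iff p \in {\mathbb{Z}^+}^{Im(S_2,S_1)} \iff p \in {\mathbb{Z}^+}^{\{S_1,S_2\}}$. Thus $\PAR(S_1,S_2) = {\mathbb{Z}^+}^{\{S_1,S_2\}} = \PAR(S_2,S_1)$.
\end{proof}
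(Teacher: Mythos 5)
Your proof is correct and is essentially identical to the paper's own argument: both reduce commutativity to the fact that $Im(S_1,S_2)=\{S_1,S_2\}=Im(S_2,S_1)$ and then observe that $\PAR(S_1,S_2)$ and $\PAR(S_2,S_1)$ are the same function space ${\mathbb{Z}^+}^{\{S_1,S_2\}}$. Your added remark making the permutation-invariance of $Im$ explicit is a harmless (and slightly more careful) elaboration of the same step.
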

\begin{proof}
Let $S_1,S_2 \in \mathbb{S}$. By Definition~\ref{def:paralleliser}, $p \in \PAR(S_1,S_2) \iff p \in {\mathbb{Z}^+}^{Im(S_1,S_2)} \iff p \in {\mathbb{Z}^+}^{\{S_1,S_2\}}$. Likewise, $p \in \PAR(S_2,S_1) \iff p \in {\mathbb{Z}^+}^{Im(S_2,S_1)} \iff {p \in {\mathbb{Z}^+}^{\{S_1,S_2\}}}$. 
Thus, $\PAR(S_1,S_2) = {\mathbb{Z}^+}^{\{S_1,S_2\}} = \PAR(S_2,S_1)$.
\end{proof}

\begin{proposition}
Space aggregation is commutative. 
\end{proposition}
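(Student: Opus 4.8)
The plan is to unfold the definition of the aggregator operator and reduce the claim to the commutativity of set union. By Definition~\ref{def:aggregator}, for any $S_1,S_2 \in \mathbb{S}$ we have $\AGG(S_1,S_2) = \bigcup_{1 \leq i \leq 2} S_i = S_1 \cup S_2$, where the indexed spaces are drawn from $Im(S_1,S_2) = \{S_1,S_2\}$. The same computation applied to the tuple $(S_2,S_1)$ gives $\AGG(S_2,S_1) = \bigcup_{1 \leq i \leq 2} S_i' = S_2 \cup S_1$, since $Im(S_2,S_1) = \{S_2,S_1\} = \{S_1,S_2\}$.

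It then remains to observe the elementary set-theoretic fact that $S_1 \cup S_2 = S_2 \cup S_1$, which follows from the commutativity of logical disjunction: $x \in S_1 \cup S_2 \iff x \in S_1 \lor x \in S_2 \iff x \in S_2 \lor x \in S_1 \iff x \in S_2 \cup S_1$. Chaining the three equalities yields $\AGG(S_1,S_2) = S_1 \cup S_2 = S_2 \cup S_1 = \AGG(S_2,S_1)$, as required.

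There is essentially no obstacle here: the only subtlety worth a sentence is that the $n$-tuple notation is, by the footnote accompanying Definition~\ref{def:sequencer}, a surjection onto its image, so permuting the arguments does not change the image set over which the union is taken — and since the aggregator is defined purely in terms of that image, the reordering is invisible to the operator. Thus the proof is a one-line appeal to commutativity of union, mirroring the style of the two preceding commutativity proofs for the sequencer and the paralleliser.
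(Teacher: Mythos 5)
Your proof is correct and follows the same route as the paper, which simply notes that the claim reduces to the commutativity of set union; you merely spell out the unfolding of Definition~\ref{def:aggregator} that the paper leaves implicit. No gaps.
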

\begin{proof}
The proof follows from the fact that set union is commutative.
\end{proof}

\subsubsection{Associativity.}

Associativity allows rearranging the order of space operands when performing a composition operation. This property is only satisfied by the aggregator operator. 

\begin{proposition}
Space sequencing is not associative. 
\end{proposition}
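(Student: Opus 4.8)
The plan is to refute associativity by producing a concrete counterexample, i.e., a triple $S_1,S_2,S_3$ for which $\SEQ(S_1,\SEQ(S_2,S_3)) \neq \SEQ(\SEQ(S_1,S_2),S_3)$. First I would unfold both sides using Definition~\ref{def:sequencer}. Writing $A = \SEQ(S_2,S_3) = \{S_2,S_3\}^{\mathbb{Z}^+}$ and $B = \SEQ(S_1,S_2) = \{S_1,S_2\}^{\mathbb{Z}^+}$, the left-hand side becomes $\{S_1,A\}^{\mathbb{Z}^+}$ and the right-hand side becomes $\{B,S_3\}^{\mathbb{Z}^+}$, i.e., the sets of all strictly increasing (non-empty) sequences over the base sets $\{S_1,A\}$ and $\{B,S_3\}$ respectively. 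In passing I would note that both nested applications are legal: $A$ and $B$ are non-empty (indeed of continuum cardinality, by the earlier proposition), and the side condition $S \notin A$ of Definition~\ref{def:sequential-computon} holds because the outer space is composite and of strictly larger order than either of its operands.

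Next I would fix $S_1,S_2,S_3$ to be three pairwise distinct primitive spaces; such spaces exist because $H$ is infinite, so there is an unbounded supply of primitive computons and hence of primitive spaces. For this choice I would exhibit an explicit member of the left-hand side that is not a member of the right-hand side, namely the length-one sequence $\langle A \rangle$, that is, the sequential computon $\{(1,A)\}$ (a valid sequential computon: its domain $\{1\}$ is a non-empty initial segment of $\mathbb{Z}^+$, and its codomain $\{S_1,A\}$ has two elements and does not contain the resulting space). Plainly $\langle A \rangle \in \{S_1,A\}^{\mathbb{Z}^+}$. If it also lay in $\{B,S_3\}^{\mathbb{Z}^+}$, then $A \in \{B,S_3\}$, so either $A = S_3$ or $A = B$.

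Then I would rule out both alternatives. The equality $A = S_3$ is impossible because $A = \SEQ(S_2,S_3) \in \mathbb{C}$ while $S_3 \in \mathbb{P}$, and $\mathbb{P} \cap \mathbb{C} = \emptyset$ by Remarks~\ref{rem:computional-subsumption} and~\ref{rem:composite-subsumption} (a space cannot simultaneously subsume nothing and subsume something). The equality $A = B$ is impossible because $\langle S_3 \rangle = \{(1,S_3)\} \in A = \{S_2,S_3\}^{\mathbb{Z}^+}$, whereas $\langle S_3 \rangle \notin B = \{S_1,S_2\}^{\mathbb{Z}^+}$ since $S_3 \notin \{S_1,S_2\}$ by distinctness. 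Hence $\langle A \rangle$ witnesses $\SEQ(S_1,\SEQ(S_2,S_3)) \neq \SEQ(\SEQ(S_1,S_2),S_3)$, and space sequencing is not associative.

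The step I expect to require the most care is the bookkeeping around well-formedness of the nested operators — verifying the arity and the $S \notin A$ conditions, and confirming that the one-element sequence qualifies under Definition~\ref{def:sequential-computon} — rather than the inequality itself, which is essentially immediate once the witness is chosen. Should a more structural argument be preferred, an alternative is to observe that $\SEQ(S_1,A)$ subsumes exactly $S_1$ (order $0$) and $A$ (order $1$), whereas $\SEQ(B,S_3)$ subsumes exactly $B$ (order $1$) and $S_3$ (order $0$); equality of the two spaces would force $\{S_1,A\} = \{B,S_3\}$, and matching operands by order would give $S_1 = S_3$, contradicting distinctness.
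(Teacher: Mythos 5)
Your proof is correct, and it follows the same overall strategy as the paper — refutation by exhibiting a single computon that lies in $\SEQ(S_1,\SEQ(S_2,S_3))$ but not in $\SEQ(\SEQ(S_1,S_2),S_3)$ — but with a different witness and a more complete verification. The paper's witness is the repeated sequence $\langle S_1,S_1\rangle$, which belongs to $\{S_1,\SEQ(S_2,S_3)\}^{\mathbb{Z}^+}$ because $S_1$ is an operand of the outer sequencer on the left, and fails to belong to $\{\SEQ(S_1,S_2),S_3\}^{\mathbb{Z}^+}$ provided $S_1\neq S_3$ and $S_1\neq\SEQ(S_1,S_2)$; the paper leaves both of these side conditions implicit (the second follows from the clause $S\notin A$ in Definition~\ref{def:sequential-computon}, the first from an unstated distinctness assumption on the operands). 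Your witness is instead the length-one sequence $\langle A\rangle$ with $A=\SEQ(S_2,S_3)$, and the burden shifts to showing $A\notin\{B,S_3\}$, which you discharge cleanly: $A\neq S_3$ by the primitive/composite dichotomy of Remarks~\ref{rem:computional-subsumption} and~\ref{rem:composite-subsumption}, and $A\neq B$ by exhibiting $\langle S_3\rangle$ as a member of $A\setminus B$. What your version buys is rigour — you state the distinctness hypothesis on $S_1,S_2,S_3$ explicitly, justify that such spaces exist, and check the well-formedness conditions (non-empty domain, arity, $S\notin A$, legality of length-one sequences, which the paper confirms are admissible for non-surjective computons) — at the cost of a slightly longer argument; the paper's witness is marginally more economical because $S_1$ sits directly among the left-hand operands, so no comparison of two composite spaces is needed. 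Either witness suffices, and your closing order-based argument is a valid third route.
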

\begin{proof}
Given $S_1,S_2,S_3 \in \mathbb{S}$, a counterexample is provided by the sequential computon $\langle S_1,S_1 \rangle$ which is in $\SEQ(S_1,\SEQ(S_2,S_3))$ but not in $\SEQ(\SEQ(S_1,S_2),S_3)$. Hence, $\SEQ(S_1,\SEQ(S_2,S_3))\neq\SEQ(\SEQ(S_1,S_2),S_3)$.
\end{proof}

\begin{proposition}
Space parallelisation is not associative. 
\end{proposition}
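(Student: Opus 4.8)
The plan is to mirror the proof that space sequencing is not associative: exhibit a single parallel computon that lies in exactly one of $\PAR(S_1,\PAR(S_2,S_3))$ and $\PAR(\PAR(S_1,S_2),S_3)$. I would begin by fixing $S_1,S_2,S_3$ to be three pairwise distinct primitive spaces and unfolding both expressions with Definition~\ref{def:paralleliser}. In the binary case this gives $\PAR(S_1,\PAR(S_2,S_3)) = {\mathbb{Z}^+}^{\{S_1,\PAR(S_2,S_3)\}}$, whose computons are (partial or total) functions into $\mathbb{Z}^+$ with domain contained in $\{S_1,\PAR(S_2,S_3)\}$, and $\PAR(\PAR(S_1,S_2),S_3) = {\mathbb{Z}^+}^{\{\PAR(S_1,S_2),S_3\}}$, whose computons have domain contained in $\{\PAR(S_1,S_2),S_3\}$. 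The crux of the argument is that these two base sets are disjoint.

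Granting the disjointness, the counterexample is immediate. Treating functions as sets, take the parallel computon $p = \{(S_1,1),(\PAR(S_2,S_3),1)\}$, a total function on $\{S_1,\PAR(S_2,S_3)\}$ parallelising each of its two spaces with a single instance. Then $p \in \PAR(S_1,\PAR(S_2,S_3))$; but $S_1 \in Dom(p)$ and $S_1 \notin \{\PAR(S_1,S_2),S_3\}$, so $p \notin \PAR(\PAR(S_1,S_2),S_3)$, whence $\PAR(S_1,\PAR(S_2,S_3)) \neq \PAR(\PAR(S_1,S_2),S_3)$.

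The one place that needs an explicit argument — and the main (modest) obstacle — is establishing the disjointness of $\{S_1,\PAR(S_2,S_3)\}$ and $\{\PAR(S_1,S_2),S_3\}$, i.e., ruling out coincidences such as $S_1 = \PAR(S_1,S_2)$, $S_1 = S_3$, $\PAR(S_2,S_3) = S_3$, or $\PAR(S_2,S_3) = \PAR(S_1,S_2)$. Choosing the $S_i$ to be distinct primitives, hence singletons, handles $S_1 \neq S_3$ outright; the proposition that every parallel space is countably infinite forces $S_1, S_3 \notin \{\PAR(S_1,S_2),\PAR(S_2,S_3)\}$, since a singleton cannot have cardinality $\aleph_0$; and $\PAR(S_1,S_2) \neq \PAR(S_2,S_3)$ because the former contains a computon whose domain is $\{S_1\}$ while no computon of the latter has $S_1$ in its domain. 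No further case analysis is required, so the completed proof should be roughly as short as the one for the sequencer.
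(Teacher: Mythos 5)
Your proof is correct and follows essentially the same route as the paper's: exhibit a single parallel computon whose domain contains $S_1$, which therefore lies in $\PAR(S_1,\PAR(S_2,S_3))$ but not in $\PAR(\PAR(S_1,S_2),S_3)$. The paper uses the partial witness $\{(S_1,5)\}$ and leaves the distinctness of the underlying operand sets implicit, whereas you use a total witness and justify the disjointness explicitly via the cardinality of parallel spaces --- a harmless elaboration of the same idea.
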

\begin{proof}
Given $S_1,S_2,S_3 \in \mathbb{S}$, a counterexample is provided by the parallel computon $\{(S1,5)\}$ which is in $\PAR(S_1,\PAR(S_2,S_3))$ but not in $\PAR(\PAR(S_1,S_2),S_3)$. Hence, $\PAR(S_1,\PAR(S_2,S_3))\neq\PAR(\PAR(S_1,S_2),S_3)$.
\end{proof}

\begin{proposition}
Space aggregation is associative. 
\end{proposition}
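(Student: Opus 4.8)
The plan is to reduce the claim to the associativity of set union. By Definition~\ref{def:aggregator}, for any spaces $S_i, S_j \in \mathbb{S}$ we have $\AGG(S_i, S_j) = S_i \cup S_j$, and more generally $\AGG$ applied to a tuple is the union of its entries. Since the aggregator is total (that is, $\mathbb{C} \subset \mathbb{S}$), the intermediate results $\AGG(S_2, S_3)$ and $\AGG(S_1, S_2)$ are again spaces, so both nested expressions appearing in the statement are well-defined and the equality between them makes sense.

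First I would unfold the left-hand side by applying the definition of $\AGG$ twice, once for the inner aggregation and once for the outer one, obtaining $\AGG(S_1, \AGG(S_2, S_3)) = S_1 \cup (S_2 \cup S_3)$. Symmetrically, unfolding the right-hand side gives $\AGG(\AGG(S_1, S_2), S_3) = (S_1 \cup S_2) \cup S_3$. Then I would invoke the associativity of $\cup$ to identify both expressions with the single union $S_1 \cup S_2 \cup S_3 = \AGG(S_1, S_2, S_3)$, which finishes the argument.

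There is essentially no obstacle here: the only point worth checking is that nesting aggregations is legitimate, and that is exactly what totality of the operator provides. If one wanted to be thorough, one could also observe that the $n$-ary form $\AGG(S_1, \ldots, S_n) = \bigcup_{1 \leq i \leq n} S_i$ is manifestly independent of any bracketing, which is the general statement of which the three-argument case displayed in Table~\ref{tab:operators-properties} is the instance $n = 3$; no separate computation is required.
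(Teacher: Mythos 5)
Your proof is correct and takes essentially the same route as the paper, which simply observes that the claim follows from the associativity of set union; you merely unfold $\AGG(S_i,S_j)=S_i\cup S_j$ explicitly before invoking that fact. The extra remarks about totality guaranteeing well-definedness of the nested expressions are a harmless (and reasonable) elaboration, not a departure.
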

\begin{proof}
The proof follows from the fact that set union is associative.
\end{proof}

\subsection{Semantics of Computon Space Reduction}

To choose one or multiple computons from a space, the Meronomic model provides the \emph{reductor operator} which takes a space $S$ and produces a subset of $S$'s computons satisfying some condition. When this happens, we say that $S$ is reduced into $S'$.

\begin{definition}[Space Reductor] \label{def:selector}
The space reductor ${\sigma_{\phi}:\mathbb{S} \rightarrow \mathbb{S}}$ is a unary operator given by: 
\begin{align*}
\sigma_{\phi}(S) = \{p \mid p \in S~\land~\phi(p)\}
\end{align*}
where $S \in \mathbb{S}$, $\phi$ is a propositional formula and $\sigma_{\phi}(S) \subseteq S$.
\end{definition}

A propositional formula $\phi$ consists of terms connected by the logical operators $\land$, $\lor$ and $\lnot$. For a reduction $\sigma_{\phi}(S)$, each term of $\phi(p)$ is defined according to the nature of $S$. If $S$ is a sequential space, then each term can be ${p(i)=S_j}$ or $p(i)=p(k)$ or $|p|=l$ where $p$ is the free variable, $S_j \sqsubset S$ and $i,k,l \in \mathbb{Z}^+$. If $S$ is a parallel space, each term is $p(S_j)=i$ or $p(S_j)=p(S_k)$ or $S_j \notin Dom(p)$ where $p$ is the free variable, $i,j,k \in \mathbb{Z}^+$ and $S_j,S_k \sqsubset S$. When $S$ is an aggregated space, each term can be of the form $p \in S_i$ where $S_i \sqsubset S$ and $p$ is the free variable. A primitive space $S$ can only be reduced to itself via the superflous term $p \in S$. For conciseness, in this paper we do not reduce primitive spaces.

\begin{example}
Given a sequential space ${S=\SEQ(S_1,S_2,S_3,S_4,S_5)}$, we define the operation $\sigma_{p(1)=S_3~\land~p(2)=p(4)}(S)$ to select all the computons in $S$ where the first element in the sequence is $S_3$, and the second and fourth elements are equal. In this case, the resulting space will contain $\langle S_3,S_2,S_1,S_2 \rangle$ and $\langle S_3,S_3,S_1,S_3,S_1 \rangle$, among an infinite number of sequential computons. Should we prefer to reduce $S$ into the singleton space $S'=\{\langle S_3,S_2,S_1,S_2 \rangle\}$, we define the operation $\sigma_{p(1)=S_3~\land~p(2)=S_2~\land~p(3)=S_1~\land~p(4)=S_2~\land~|p|=4}(S)$. Diagramatically, reducing $S$ into $S'$ can be expressed as shown in Fig.~\ref{fig:sequential-space-reduction}.

\begin{figure}[h]
\center
\begin{tikzpicture}[level distance=1cm,
  level 1/.style={sibling distance=1cm},
  level 2/.style={sibling distance=1cm}]
  \node(seq1)[SEQ,scale=1]{}
    child {node(s1)[draw] {$S_1$}}
	child {node(s2)[draw] {$S_2$}}
	child {node(s3)[draw] {$S_3$}}
	child {node(s4)[draw] {$S_4$}}       
    child {node(s5)[draw]{$S_5$}};

\node at ($(seq1)+(-2.25,0.25)$) (s) {$S$}; 
\draw[dashed]($(seq1) + (-2.5,0.5)$)rectangle($(s5) + (0.5,-0.5)$);

\node[draw] at (-2.25,-3.5) (0) {$S_3$};
\node[draw] at (-0.75,-3.5) (1) {$S_2$};
\node[draw] at (0.75,-3.5) (2) {$S_1$};
\node[draw] at (2.25,-3.5) (3) {$S_2$};

\node at (0.5,-2) (sel) {$S'=\sigma_{p(1)=S_3~\land~p(2)=S_2~\land~p(3)=S_1~\land~p(4)=S_2~\land~|p|=4}(S)$}; 
\draw[-{Triangle[width=10pt,length=8pt]}, line width=3pt](-4,-1.5) -- (-4, -2.5);

\node at ($(1)+(-1.75,0.7)$) (s) {$S'$}; 
\draw[dashed]($(1) + (-2,0.9)$)rectangle($(3) + (0.5,-0.5)$);

\draw[->] (0) -- (1);
\draw[->] (1) -- (2);
\draw[->] (2) -- (3);
\end{tikzpicture}
\caption{Reducing ${S=\protect\SEQ(S_1,S_2,S_3,S_4,S_5)}$ into ${S'=\{\langle S_3,S_2,S_1,S_2 \rangle\}}$.}
\label{fig:sequential-space-reduction}
\end{figure}
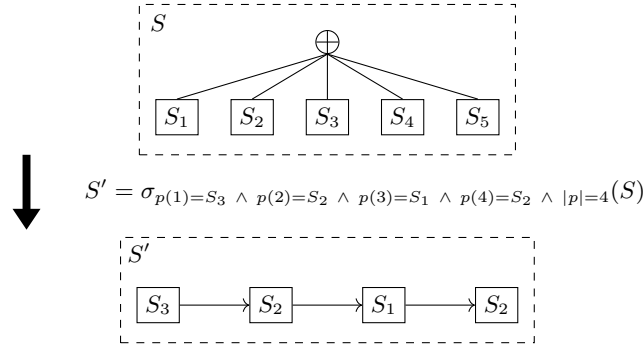
\end{example}

\begin{example}
Given a parallel computon space ${S=\PAR(S_1,S_2,S_3,S_4,S_5)}$, we define $\sigma_{p(S_3)=1~\land~p(S_2)=p(S_4)}(S)$ to select all the parallel computons in $S$ where $S_3$ has one parallel instance and $S_2$ and $S_4$ have the same number of parallel instances. In this case, the resulting space will contain $\{(S_1,7),(S_2,3),(S_3,1),(S_4,3)\}$ and $\{(S_2,2),(S_3,1),(S_4,2)\}$, among an infinite number of parallel computons. Should we prefer to reduce $S$ into the singleton space ${S'=\{\{(S_1,1),(S_2,2),(S_3,1)\}\}}$, we define the operation $\sigma_{p(S_1)=1~\land~p(S_2)=2~\land~p(S_3)=1~\land~S_4 \notin Dom(p)~\land~S_5 \notin Dom(p)}(S)$. Reducing $S$ into $S'$ can diagramatically be represented as shown in Fig.~\ref{fig:reduction-par}.

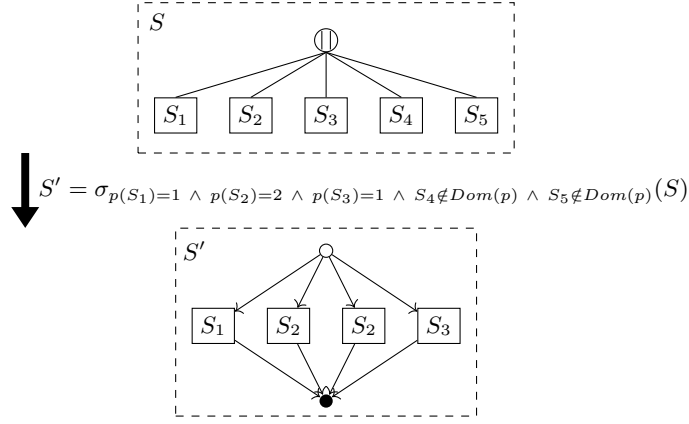
\begin{figure}[!h]
\center
\begin{tikzpicture}[level distance=1cm,
  level 1/.style={sibling distance=1cm},
  level 2/.style={sibling distance=1cm}]
  \node(seq1)[PAR,scale=1]{}
    child {node(s1)[draw] {$S_1$}}
	child {node(s2)[draw] {$S_2$}}
	child {node(s3)[draw] {$S_3$}}
	child {node(s4)[draw] {$S_4$}}       
    child {node(s5)[draw]{$S_5$}};

\node at ($(seq1)+(-2.25,0.25)$) (s) {$S$}; 
\draw[dashed]($(seq1) + (-2.5,0.5)$)rectangle($(s5) + (0.5,-0.5)$);

\node at (0.5,-2) (sel) {$S'=\sigma_{p(S_1)=1~\land~p(S_2)=2~\land~p(S_3)=1~\land~S_4 \notin Dom(p)~\land~S_5 \notin Dom(p)}(S)$}; 
\draw[-{Triangle[width=10pt,length=8pt]}, line width=3pt](-4,-1.5) -- (-4, -2.5);

\node[circle,draw=black,inner sep=0pt,minimum size=5pt] at (0,-2.8) (fork) {};
\node[draw] at (-1.5,-3.8) (i1) {$S_1$};
\node[draw] at (-0.5,-3.8) (i2) {$S_2$};
\node[draw] at (0.5,-3.8) (i3) {$S_2$};
\node[draw] at (1.5,-3.8) (i4) {$S_3$};
\node[circle,fill=black,inner sep=0pt,minimum size=5pt] at (0,-4.8) (join) {};

\node at ($(fork)+(-1.75,0)$) (s) {$S'$}; 
\draw[dashed]($(i1) + (-0.5,1.3)$)rectangle($(i3) + (1.5,-1.2)$);

\draw[->] (fork) -- (i1); \draw[->] (i1) -- (join);
\draw[->] (fork) -- (i2); \draw[->] (i2) -- (join);
\draw[->] (fork) -- (i3); \draw[->] (i3) -- (join);
\draw[->] (fork) -- (i4); \draw[->] (i4) -- (join);

\end{tikzpicture}
\caption{Reducing ${S=\protect\PAR(S_1,S_2,S_3,S_4,S_5)}$ into ${S'=\{\{(S_1,1),(S_2,2),(S_3,1)\}\}}$.}
\label{fig:reduction-par}
\end{figure}
\end{example}

\begin{example}
Given an aggregated space ${S=\AGG(S_1,S_2,S_3)}$, we reduce it into $\AGG(S_1,S_3)$ via the operation $\sigma_{p \in S_1~\land~p \in S_3}(S)$. Likewise, we can reduce $S$ into $S_2$ through the operation $\sigma_{p \in S_2}(S)$, as illustrated in Fig.~\ref{fig:reduction-agg}.

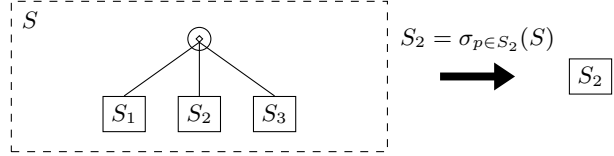
\begin{figure}[!h]
\center
\begin{tikzpicture}[level distance=1cm,
  level 1/.style={sibling distance=1cm},
  level 2/.style={sibling distance=1cm}]
  \node(agg1)[AGG,scale=1]{}
    child {node(s1)[draw] {$S_1$}}
	child {node(s2)[draw] {$S_2$}}
	child {node(s3)[draw] {$S_3$}};

\node at ($(agg1)+(-2.25,0.25)$) (s) {$S$}; 
\draw[dashed]($(agg1) + (-2.5,0.5)$)rectangle($(s5) + (0.5,-0.5)$);

\node[draw] at (5.2,-0.5) (0) {$S_2$};

\node at (3.7,0) (sel) {$S_2=\sigma_{p \in S_2}(S)$}; 
\draw[-{Triangle[width=10pt,length=8pt]}, line width=3pt](3.2,-0.5) -- (4.2, -0.5);
\end{tikzpicture}
\caption{Reducing ${S=\protect\AGG(S_1,S_2,S_3)}$ into $S_2$.}
\label{fig:reduction-agg}
\end{figure}
\end{example}

\begin{proposition}\label{prop:reductor-property-1}
Given $S \in \mathbb{S}$ and the propositional formulas $\phi$ and $\psi$, we have that ${\sigma_{\phi\land\psi}(S)=\sigma_{\phi}(S) \cap \sigma_{\psi}(S)}$.
\end{proposition}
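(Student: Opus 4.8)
The plan is to unfold both sides using Definition~\ref{def:selector} and then reduce the resulting set-builder expressions using elementary propositional identities (idempotence, commutativity and associativity of $\land$). I would proceed by establishing a chain of equalities rather than a double inclusion, since each step is reversible.

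First I would expand the left-hand side directly from the definition: $\sigma_{\phi\land\psi}(S)=\{p \mid p \in S~\land~(\phi\land\psi)(p)\}=\{p \mid p \in S~\land~\phi(p)~\land~\psi(p)\}$, where the second equality is just the pointwise reading of the conjunction of the two propositional formulas. Next I would expand the right-hand side: $\sigma_{\phi}(S)=\{p \mid p \in S~\land~\phi(p)\}$ and $\sigma_{\psi}(S)=\{p \mid p \in S~\land~\psi(p)\}$, so that by the definition of set intersection $\sigma_{\phi}(S)\cap\sigma_{\psi}(S)=\{p \mid (p \in S~\land~\phi(p))~\land~(p \in S~\land~\psi(p))\}$. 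Finally I would observe that the membership predicate $p \in S$ appears twice and can be collapsed by idempotence of $\land$, and that reordering the remaining conjuncts via commutativity and associativity yields exactly $\{p \mid p \in S~\land~\phi(p)~\land~\psi(p)\}$, which matches the expanded left-hand side. Hence the two sets coincide. I would also note in passing that well-definedness of $\sigma_{\phi}(S)$ and $\sigma_{\psi}(S)$ (i.e.\ that $\phi$ and $\psi$ are legitimate formulas for $S$ in the sense of the term grammar following Definition~\ref{def:selector}) carries over to $\phi\land\psi$, so both sides are genuinely of the form $\sigma_{(\cdot)}(S)$.

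There is no real obstacle here: the statement is a direct consequence of the definition together with the fact that $\{x \mid P(x)\}\cap\{x \mid Q(x)\}=\{x \mid P(x)\land Q(x)\}$. The only point requiring a word of care is the implicit identification of the syntactic conjunction $\phi\land\psi$ of propositional formulas with the pointwise conjunction of the predicates $\phi(p)$ and $\psi(p)$; once this is made explicit, the remainder is a routine rewriting.
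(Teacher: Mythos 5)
Your proposal is correct and follows essentially the same route as the paper: both unfold Definition~\ref{def:selector} and reduce the claim to the elementary identity $\{x \mid P(x)\}\cap\{x \mid Q(x)\}=\{x \mid P(x)\land Q(x)\}$, the paper phrasing it as a chain of membership equivalences and you as a chain of set-builder equalities. Your extra remarks on collapsing the duplicated $p\in S$ conjunct and on well-definedness of $\phi\land\psi$ are harmless refinements of the same argument.
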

\begin{proof}
By definition~\ref{def:selector}, $p \in \sigma_{\phi\land\psi}(S) \iff \phi(p)~\land~\psi(p) \iff p \in \sigma_{\phi}(S)~\land~p \in \sigma_{\psi}(S) \iff p \in \sigma_{\phi}(S) \cap \sigma_{\psi}(S)$.
\end{proof}

\begin{proposition}\label{prop:reductor-property-2}
Given $S \in \mathbb{S}$ and the propositional formulas $\phi$ and $\psi$, we have that ${\sigma_{\phi\lor\psi}(S)=\sigma_{\phi}(S) \cup \sigma_{\psi}(S)}$.
\end{proposition}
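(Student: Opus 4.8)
The plan is to mirror almost verbatim the proof of Proposition~\ref{prop:reductor-property-1}, replacing conjunction with disjunction and intersection with union throughout. Concretely, I would unfold the left-hand side using Definition~\ref{def:selector}: an element $p$ lies in $\sigma_{\phi\lor\psi}(S)$ exactly when $p \in S$ and the propositional formula $\phi(p)\lor\psi(p)$ holds. Then I would invoke the standard logical equivalence that a disjunction $\phi(p)\lor\psi(p)$ holds iff $\phi(p)$ holds or $\psi(p)$ holds, distributing the side condition $p\in S$ across the disjunction to obtain $(p\in S \land \phi(p))\lor(p\in S\land\psi(p))$. Each disjunct is, again by Definition~\ref{def:selector}, precisely the membership condition for $\sigma_{\phi}(S)$ and $\sigma_{\psi}(S)$ respectively, so the whole statement is equivalent to $p\in\sigma_{\phi}(S)\lor p\in\sigma_{\psi}(S)$, which by the definition of set union is $p\in\sigma_{\phi}(S)\cup\sigma_{\psi}(S)$.

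The key steps in order are thus: (i) apply the defining equation of the reductor to $\sigma_{\phi\lor\psi}(S)$; (ii) use the distributivity of the ``$p\in S$'' conjunct over the disjunction $\phi(p)\lor\psi(p)$ at the level of propositional logic; (iii) re-fold each resulting disjunct back into $\sigma_{\phi}(S)$ and $\sigma_{\psi}(S)$ via Definition~\ref{def:selector}; and (iv) recognise the disjunction of the two membership statements as membership in the union. Since all four steps are biconditionals, chaining them yields the desired set equality by extensionality. A single displayed chain of $\iff$'s, exactly parallel to the one in the preceding proof, suffices to present all of this.

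There is essentially no obstacle here: the only thing to be mildly careful about is that the propositional formulas $\phi,\psi$ are evaluated on the same free variable $p$ and that the side condition $p\in S$ is common to both reductions, so the distribution in step (ii) is the genuine propositional tautology $a\land(b\lor c)\equiv(a\land b)\lor(a\land c)$ rather than something that silently assumes $\sigma_{\phi}(S)$ and $\sigma_{\psi}(S)$ share no context. Given that Definition~\ref{def:selector} already guarantees $\sigma_{\phi}(S)\subseteq S$ and $\sigma_{\psi}(S)\subseteq S$, there is no concern about elements of the union lying outside $S$. The proof is therefore a two-line chain of equivalences.

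\begin{proof}
By Definition~\ref{def:selector}, $p \in \sigma_{\phi\lor\psi}(S) \iff p \in S~\land~(\phi(p)~\lor~\psi(p)) \iff (p \in S~\land~\phi(p))~\lor~(p \in S~\land~\psi(p)) \iff p \in \sigma_{\phi}(S)~\lor~p \in \sigma_{\psi}(S) \iff p \in \sigma_{\phi}(S) \cup \sigma_{\psi}(S)$.
\end{proof}
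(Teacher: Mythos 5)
Your proof is correct and follows essentially the same chain of biconditionals as the paper's own proof; in fact, by explicitly carrying the side condition $p \in S$ and distributing it over the disjunction, your version is slightly more careful than the paper's, which drops $p \in S$ after the first equivalence.
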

\begin{proof}
By definition~\ref{def:selector}, $p \in \sigma_{\phi\lor\psi}(S) \iff \phi(p)~\lor~\psi(p) \iff p \in \sigma_{\phi}(S)~\lor~p \in \sigma_{\psi}(S) \iff p \in \sigma_{\phi}(S) \cup \sigma_{\psi}(S)$.
\end{proof}

\begin{proposition}\label{prop:reductor-property-3}
Given $S \in \mathbb{S}$ and the propositional formula $\phi$, we have that ${\sigma_{\lnot\phi}(S)=S \setminus \sigma_{\phi}(S)}$.
\end{proposition}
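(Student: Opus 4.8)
The plan is to mirror the argument used for Propositions~\ref{prop:reductor-property-1} and~\ref{prop:reductor-property-2}: unfold Definition~\ref{def:selector} on both sides and reduce the set identity to a propositional tautology. First I would fix an arbitrary $p$ and expand the left-hand side, writing $p \in \sigma_{\lnot\phi}(S) \iff p \in S \land (\lnot\phi)(p) \iff p \in S \land \lnot\phi(p)$, using only the defining comprehension of the reductor.

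Next I would rewrite the negated conjunct in terms of $\sigma_{\phi}(S)$. Since Definition~\ref{def:selector} gives $q \in \sigma_{\phi}(S) \iff q \in S \land \phi(q)$, for the fixed $p$ already known to lie in $S$ we have $\phi(p) \iff p \in \sigma_{\phi}(S)$, and hence $\lnot\phi(p) \iff p \notin \sigma_{\phi}(S)$. Substituting, $p \in \sigma_{\lnot\phi}(S) \iff p \in S \land p \notin \sigma_{\phi}(S)$, which is exactly the definition of membership in the relative complement $S \setminus \sigma_{\phi}(S)$. As $p$ was arbitrary, the two sets coincide.

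The only point that needs care — and the closest thing to an obstacle — is that the complement must be taken \emph{relative to} $S$ rather than relative to the whole universe $\mathbb{S}$: the negation $\lnot\phi$ is only evaluated on computons $p \in S$, and the equivalence $\phi(p) \iff p \in \sigma_{\phi}(S)$ is used in the direction that requires $p \in S$. This is harmless because Definition~\ref{def:selector} already guarantees $\sigma_{\phi}(S) \subseteq S$, so $S \setminus \sigma_{\phi}(S)$ is precisely the set of computons of $S$ failing $\phi$; I would state this explicitly so the reader sees why no out-of-$S$ elements can intrude. No further machinery is needed.
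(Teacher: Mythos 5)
Your proposal is correct and follows essentially the same route as the paper's own proof: unfold Definition~\ref{def:selector}, replace $\lnot\phi(p)$ (for $p \in S$) with $p \notin \sigma_{\phi}(S)$, and recognise the result as membership in the relative complement $S \setminus \sigma_{\phi}(S)$. Your explicit remark that the equivalence $\phi(p) \iff p \in \sigma_{\phi}(S)$ requires $p \in S$, and that the complement is taken relative to $S$, is a welcome clarification that the paper leaves implicit.
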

\begin{proof}
By definition~\ref{def:selector}, $p \in \sigma_{\lnot\phi}(S) \iff {p \in S~\land~\lnot\phi(p)} \iff p \in S~\land~p \notin \sigma_{\phi}(S) \iff p \in S \setminus \sigma_{\phi}(S)$.
\end{proof}

\begin{proposition}\label{prop:reductor-property-4}
Given $S_1,S_2 \in \mathbb{S}$ and a propositional formula $\phi$, we have that $\sigma_\phi(S_1 \cup S_2)=\sigma_\phi(S_1) \cup \sigma_\phi(S_2)$.
\end{proposition}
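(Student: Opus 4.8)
The plan is to establish the set identity by a single chain of logical equivalences on an arbitrary computon $p$, mirroring the proofs of Propositions~\ref{prop:reductor-property-1}--\ref{prop:reductor-property-3}. First I would unfold the left-hand side with Definition~\ref{def:selector}: $p \in \sigma_\phi(S_1 \cup S_2)$ holds if and only if $p \in S_1 \cup S_2$ and $\phi(p)$. Rewriting membership in the union set-theoretically, this is equivalent to $(p \in S_1 \lor p \in S_2) \land \phi(p)$.

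The key step is then the distributive law of propositional logic, which turns this into $(p \in S_1 \land \phi(p)) \lor (p \in S_2 \land \phi(p))$. Folding each disjunct back through Definition~\ref{def:selector} gives $p \in \sigma_\phi(S_1) \lor p \in \sigma_\phi(S_2)$, that is, $p \in \sigma_\phi(S_1) \cup \sigma_\phi(S_2)$. Since $p$ was arbitrary, both sides contain exactly the same computons and the identity follows.

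The one point worth making explicit --- and the closest thing to an obstacle here --- is that $\phi(p)$ denotes the same truth value in all three occurrences: by the discussion following Definition~\ref{def:selector}, every term of $\phi$ is a statement purely about the internal structure of the computon $p$ and about fixed subsumed spaces, so its truth does not depend on whether $p$ is regarded as an element of $S_1$, of $S_2$, or of $S_1 \cup S_2$. Granting this, the proposition is nothing more than the distributivity of the reductor over set union, and no further argument is needed.
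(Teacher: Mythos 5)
Your proof is correct and follows essentially the same element-wise chain of equivalences as the paper's own proof, unfolding Definition~\ref{def:selector}, rewriting the union, and folding back. You are in fact slightly more careful than the paper in making the distributive step $(p \in S_1 \lor p \in S_2) \land \phi(p) \iff (p \in S_1 \land \phi(p)) \lor (p \in S_2 \land \phi(p))$ explicit, which the paper leaves implicit (and writes without parentheses), but the argument is the same.
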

\begin{proof}
By definition~\ref{def:selector}, $p \in \sigma_\phi(S_1 \cup S_2) \iff {p \in S_1 \cup S_2~\land~\phi(p)} \iff p \in S_1~\lor~p \in S_2~\land~\phi(p) \iff {p \in \sigma_\phi(S_1)~\lor~p \in \sigma_\phi(S_2)} \iff {p \in \sigma_\phi(S_1) \cup \sigma_\phi(S_2)}$.
\end{proof}

\begin{proposition}\label{prop:reductor-property-5}
Given $S_1,S_2 \in \mathbb{S}$ and a propositional formula $\phi$, we have that $\sigma_\phi(S_1 \cap S_2)=\sigma_\phi(S_1) \cap \sigma_\phi(S_2)$.
\end{proposition}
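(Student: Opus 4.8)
The plan is to mirror the proof of Proposition~\ref{prop:reductor-property-4} almost verbatim, replacing union by intersection and unfolding Definition~\ref{def:selector} into a chain of logical equivalences over an arbitrary computon. First I would fix $p$ and note that $p \in \sigma_\phi(S_1 \cap S_2)$ iff $p \in S_1 \cap S_2$ and $\phi(p)$, which by the definition of set intersection is $p \in S_1 \land p \in S_2 \land \phi(p)$.

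The one load-bearing move is to duplicate the predicate using idempotence of conjunction, rewriting $p \in S_1 \land p \in S_2 \land \phi(p)$ as $(p \in S_1 \land \phi(p)) \land (p \in S_2 \land \phi(p))$; commutativity and associativity of $\land$ make this rearrangement routine. Each conjunct is then exactly the membership condition defining $\sigma_\phi(S_1)$ and $\sigma_\phi(S_2)$ respectively, so the whole statement is equivalent to $p \in \sigma_\phi(S_1) \land p \in \sigma_\phi(S_2)$, i.e. $p \in \sigma_\phi(S_1) \cap \sigma_\phi(S_2)$. Since $p$ was arbitrary, the two sets coincide.

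I expect no real obstacle here: unlike the propositions about sequential and parallel spaces, there are no cardinality arguments or computon-structure case splits, and $\phi$ is treated as an opaque unary predicate throughout. The only thing to be careful about is remembering that $\phi(p)$ must be copied across \emph{both} conjuncts rather than distributed to just one; once that is noted the equivalences are forced. If one preferred, the result could also be obtained by writing $S_1 \cap S_2$ through its characteristic membership condition and invoking Proposition~\ref{prop:reductor-property-1}, but the direct unfolding is shorter and self-contained, so that is the route I would take.
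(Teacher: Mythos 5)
Your proof is correct and follows essentially the same route as the paper's: unfold Definition~\ref{def:selector}, expand $p \in S_1 \cap S_2$ into a conjunction, and regroup into the two membership conditions. The paper's chain of equivalences leaves the duplication of $\phi(p)$ implicit, whereas you spell out the idempotence step explicitly, but the argument is the same.
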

\begin{proof}
By definition~\ref{def:selector}, $p \in \sigma_\phi(S_1 \cap S_2) \iff {p \in S_1 \cap S_2~\land~\phi(p)} \iff p \in S_1~\land~p \in S_2~\land~\phi(p) \iff {p \in \sigma_\phi(S_1)~\land~p \in \sigma_\phi(S_2)} \iff {p \in \sigma_\phi(S_1) \cap \sigma_\phi(S_2)}$.
\end{proof}

\begin{proposition}\label{prop:reductor-property-6}
Given $S_1,S_2 \in \mathbb{S}$ and a propositional formula $\phi$, we have that $\sigma_\phi(S_1 \cap S_2)=\sigma_\phi(S_1) \cap S_2$.
\end{proposition}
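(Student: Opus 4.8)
The plan is to argue directly from Definition~\ref{def:selector}, in the same style as the proofs of Propositions~\ref{prop:reductor-property-4} and~\ref{prop:reductor-property-5}: unfold membership in $\sigma_\phi(S_1 \cap S_2)$ into a conjunction of atomic conditions and then regroup it. First I would fix an arbitrary computon $p$ and write the chain $p \in \sigma_\phi(S_1 \cap S_2) \iff (p \in S_1 \cap S_2) \land \phi(p) \iff (p \in S_1) \land (p \in S_2) \land \phi(p)$, using Definition~\ref{def:selector} for the first step and the definition of set intersection for the second.

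Next I would use commutativity and associativity of $\land$ to regroup this as $\big((p \in S_1) \land \phi(p)\big) \land (p \in S_2)$, which by Definition~\ref{def:selector} is exactly $(p \in \sigma_\phi(S_1)) \land (p \in S_2)$, i.e. $p \in \sigma_\phi(S_1) \cap S_2$. Since $p$ was arbitrary, extensionality of sets gives $\sigma_\phi(S_1 \cap S_2) = \sigma_\phi(S_1) \cap S_2$. (Alternatively, the statement follows from Proposition~\ref{prop:reductor-property-5} together with the observation that $\sigma_\phi(S_1) \cap \sigma_\phi(S_2) = \sigma_\phi(S_1) \cap S_2$, because every $p \in \sigma_\phi(S_1)$ already satisfies $\phi(p)$, so for such a $p$ the conditions $p \in \sigma_\phi(S_2)$ and $p \in S_2$ coincide; I would nonetheless present the direct version, as it is self-contained and no longer.)

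There is essentially no obstacle here; the only point requiring care is not to over-claim. The identity is asymmetric — the right-hand side leaves $S_2$ un-reduced — so the regrouping must keep $\phi$ attached to $S_1$, and I would add the remark that by symmetry $\sigma_\phi(S_1 \cap S_2) = S_1 \cap \sigma_\phi(S_2)$ as well, so that all three expressions (together with the one in Proposition~\ref{prop:reductor-property-5}) agree.
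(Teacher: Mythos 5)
Your proposal is correct and follows essentially the same route as the paper's own proof: unfold membership via Definition~\ref{def:selector}, use commutativity/associativity of $\land$ to regroup $\phi(p)$ with the $p \in S_1$ conjunct, and fold back into $\sigma_\phi(S_1) \cap S_2$. The additional remarks (the derivation from Proposition~\ref{prop:reductor-property-5} and the symmetric variant, which the paper states separately as Proposition~\ref{prop:reductor-property-7}) are accurate but not needed.
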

\begin{proof}
By definition~\ref{def:selector}, $p \in \sigma_\phi(S_1 \cap S_2) \iff {p \in S_1 \cap S_2~\land~\phi(p)} \iff p \in S_1~\land~\phi(p)~\land~p \in S_2 \iff {p \in \sigma_\phi(S_1)~\land~p \in S_2} \iff p \in \sigma_\phi(S_1) \cap S_2$.
\end{proof}

\begin{proposition}\label{prop:reductor-property-7}
Given $S_1,S_2 \in \mathbb{S}$ and a propositional formula $\phi$, we have that $\sigma_\phi(S_1 \cap S_2)=S_1 \cap \sigma_\phi(S_2)$.
\end{proposition}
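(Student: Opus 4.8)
The plan is to proceed exactly as in the proof of Proposition~\ref{prop:reductor-property-6}, exploiting the symmetry between the two factors $S_1$ and $S_2$. First I would unfold the left-hand side using Definition~\ref{def:selector}: a computon $p$ belongs to $\sigma_\phi(S_1 \cap S_2)$ if and only if $p \in S_1 \cap S_2$ and $\phi(p)$ holds, which by the definition of set intersection is equivalent to $p \in S_1~\land~p \in S_2~\land~\phi(p)$.

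Next I would regroup the conjunction. Since $\land$ is commutative and associative, $p \in S_1~\land~p \in S_2~\land~\phi(p)$ is equivalent to $p \in S_1~\land~(p \in S_2~\land~\phi(p))$. Applying Definition~\ref{def:selector} once more to the parenthesised part identifies $p \in S_2~\land~\phi(p)$ with $p \in \sigma_\phi(S_2)$, so the whole expression becomes $p \in S_1~\land~p \in \sigma_\phi(S_2)$, i.e., $p \in S_1 \cap \sigma_\phi(S_2)$. Chaining these biconditionals yields the claimed set equality.

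I do not anticipate any genuine obstacle: the statement is the mirror image of Proposition~\ref{prop:reductor-property-6}, and the argument is a one-line element chase. The only point requiring (trivial) care is the reassociation of the three-way conjunction before the second appeal to Definition~\ref{def:selector}; everything else is purely the definition of intersection. One could alternatively derive it in a single step by combining Proposition~\ref{prop:reductor-property-6} with the commutativity of $\cap$ (using $S_1 \cap S_2 = S_2 \cap S_1$ and $\sigma_\phi(S_2) \cap S_1 = S_1 \cap \sigma_\phi(S_2)$), but the direct element chase is cleaner and self-contained.
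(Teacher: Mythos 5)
Your element-chase is exactly the paper's proof: unfold Definition~\ref{def:selector}, rewrite $p \in S_1 \cap S_2$ as a conjunction, regroup to recognise $p \in \sigma_\phi(S_2)$, and conclude. The proposal is correct and takes essentially the same approach.
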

\begin{proof}
By definition~\ref{def:selector}, $p \in \sigma_\phi(S_1 \cap S_2) \iff {p \in S_1 \cap S_2~\land~\phi(p)} \iff p \in S_1~\land~p \in S_2~\land~\phi(p) \iff {p \in S_1~\land~p \in \sigma_\phi(S_2)} \iff p \in S_1 \cap \sigma_\phi(S_2)$.
\end{proof}

\begin{proposition}\label{prop:reductor-property-8}
Given $S \in \mathbb{S}$ and the propositional formulas $\phi$ and $\psi$, we have that $\sigma_{\phi}(\sigma_{\psi}(S))=\sigma_{\psi}(\sigma_{\phi}(S))$.
\end{proposition}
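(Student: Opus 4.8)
The plan is to unfold the definition of the reductor and reduce the claim to the commutativity of logical conjunction, exactly as in the preceding propositions. First I would apply Definition~\ref{def:selector} to the inner reduction $\sigma_\psi(S)$, obtaining $\sigma_\psi(S) = \{p \mid p \in S \land \psi(p)\}$, and then apply it again to the outer $\sigma_\phi$, so that a computon $p$ lies in $\sigma_\phi(\sigma_\psi(S))$ iff $p \in \sigma_\psi(S) \land \phi(p)$, i.e. iff $p \in S \land \psi(p) \land \phi(p)$. Symmetrically, $p \in \sigma_\psi(\sigma_\phi(S))$ iff $p \in S \land \phi(p) \land \psi(p)$.

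Next I would observe that $p \in S \land \psi(p) \land \phi(p)$ and $p \in S \land \phi(p) \land \psi(p)$ are logically equivalent, since $\land$ is commutative (and associative). Hence the two sets have exactly the same members, which gives $\sigma_\phi(\sigma_\psi(S)) = \sigma_\psi(\sigma_\phi(S))$. One small point worth noting is that Definition~\ref{def:selector} guarantees $\sigma_\psi(S) \subseteq S \subseteq \mathbb{S}$, so the outer application of $\sigma_\phi$ is well-typed; and the terms appearing in $\phi$ are still interpretable over $\sigma_\psi(S)$ because the subsumption relation used to type those terms is unchanged when passing to a subset (every $S_j \sqsubset \sigma_\psi(S)$ is also $\sqsubset S$, and conversely the admissible terms for $S$ remain meaningful on any subset).

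I do not anticipate a genuine obstacle here: the statement is a direct corollary of the set-builder form of the reductor together with the commutativity of $\land$, in the same spirit as Propositions~\ref{prop:reductor-property-1}--\ref{prop:reductor-property-7}. If anything, the only thing to be careful about is presentational — writing the chain of iff's cleanly in one line, as the paper does elsewhere, rather than belaboring the well-typedness remark. A compact proof would read: by Definition~\ref{def:selector}, $p \in \sigma_{\phi}(\sigma_{\psi}(S)) \iff p \in \sigma_\psi(S) \land \phi(p) \iff p \in S \land \psi(p) \land \phi(p) \iff p \in S \land \phi(p) \land \psi(p) \iff p \in \sigma_\phi(S) \land \psi(p) \iff p \in \sigma_{\psi}(\sigma_{\phi}(S))$.
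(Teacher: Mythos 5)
Your proposal is correct and follows essentially the same route as the paper's own proof: unfold both applications of Definition~\ref{def:selector} into a chain of equivalences, invoke the commutativity of $\land$, and reassemble. The only difference is presentational — you make the commutation step and the well-typedness of the nested application explicit, whereas the paper compresses them.
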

\begin{proof}
By definition~\ref{def:selector}, $p \in \sigma_{\phi}(\sigma_{\psi}(S)) \iff {p \in \sigma_{\psi}(S)~\land~\phi(p)} \iff p \in S~\land~\psi(p)~\land~\phi(p) \iff {p \in \sigma_{\phi}(S)~\land~\psi(p)} \iff p \in \sigma_{\psi}(\sigma_{\phi}(S))$.
\end{proof}

Fig.~\ref{fig:reductor-properties} summarises the properties of the reductor operator (i.e., Propositions~\ref{prop:reductor-property-1}-\ref{prop:reductor-property-8}).

\begin{figure}[!h]
\begin{center}
\begin{tabular}{ |m{18em} m{13em}| } 
\hline
${\sigma_{\phi\land\psi}(S)=\sigma_{\phi}(S) \cap \sigma_{\psi}(S)}$ & $\sigma_\phi(S_1 \cap S_2)=\sigma_\phi(S_1) \cap \sigma_\phi(S_2)$ \\
$\sigma_{\phi\lor\psi}(S)=\sigma_{\phi}(S) \cup \sigma_{\psi}(S)$ & $\sigma_\phi(S_1 \cap S_2)=\sigma_\phi(S_1) \cap S_2$ \\
${\sigma_{\lnot\phi}(S)=S \setminus \sigma_{\phi}(S)}$ & $\sigma_\phi(S_1 \cap S_2)=S_1 \cap \sigma_\phi(S_2)$ \\
$\sigma_\phi(S_1 \cup S_2)=\sigma_\phi(S_1) \cup \sigma_\phi(S_2)$ & $\sigma_{\phi}(\sigma_{\psi}(S))=\sigma_{\psi}(\sigma_{\phi}(S))$ \\
\hline
\end{tabular}
\end{center}
\caption{Reductor operator properties.}
\label{fig:reductor-properties}
\end{figure}

\section{Example}
\label{sec:example}

In this section, we provide an example for (inductively) composing a third-order space through the application of composition and reduction operations.

Assuming that $S_1,S_2,S_3,S_4,S_5 \in \mathbb{P}$, we start the composition process by defining the first-order parallel space $S_6=\PAR(S_1,S_2)$ which we reduce into ${A=\sigma_{p(S_1)=1~\land~p(S_2)=1}(S_6)}$. This process is illustrated in Fig.~\ref{fig:reduction-s6-a}.

\begin{figure}[!h]
\center
\begin{tikzpicture}[level distance=1cm,
  level 1/.style={sibling distance=1cm},
  level 2/.style={sibling distance=1cm}]
  \node(par1)[PAR,scale=1]{}
    child {node(s1)[draw] {$S_1$}}
	child {node(s2)[draw] {$S_2$}};

\node at ($(par1)+(-0.75,0.1)$) (s6) {$S_6$}; 
\draw[dashed]($(par1) + (-1,0.3)$)rectangle($(s2) + (0.5,-0.3)$);

\node at (3.5,0) (sel) {$A=\sigma_{p(S_1)=1~\land~p(S_2)=1}(S_6)$}; 
\draw[-{Triangle[width=10pt,length=8pt]}, line width=3pt](3,-0.5) -- (4, -0.5);

\node[circle,draw=black,inner sep=0pt,minimum size=5pt] at (7.2,0.1) (fork) {};
\node[draw] at (6.7,-0.8) (i1) {$S_1$};
\node[draw] at (7.7,-0.8) (i2) {$S_2$};
\node[circle,fill=black,inner sep=0pt,minimum size=5pt] at (7.2,-1.8) (join) {};

\node at ($(fork)+(-0.75,0)$) (a) {$A$}; 
\draw[dashed]($(i1) + (-0.5,1.3)$)rectangle($(i2) + (0.5,-1.2)$);

\draw[->] (fork) -- (i1); \draw[->] (i1) -- (join);
\draw[->] (fork) -- (i2); \draw[->] (i2) -- (join);
\end{tikzpicture}
\caption{Reducing ${S_6=\protect\PAR(S_1,S_2)}$ into ${A=\sigma_{p(S_1)=1~\land~p(S_2)=1}(S_6)=\{\{(S_1,1),(S_2,1)\}\}}$.}
\label{fig:reduction-s6-a}
\end{figure}

Likewise, we define the first-order sequential space ${S_7=\SEQ(S_3,S_4)}$ and reduce it into ${B=\sigma_{p(1)=S_3~\land~p(2)=S_4}(S_7)}$, as depicted in Fig.~\ref{fig:reduction-s7-b}.

\begin{figure*}[h]
\center
\begin{tikzpicture}[level distance=1cm,
  level 1/.style={sibling distance=1cm},
  level 2/.style={sibling distance=1cm}]
  \node(seq1)[SEQ,scale=1]{}
    child {node(s3)[draw] {$S_3$}}
	child {node(s4)[draw] {$S_4$}};

\node at ($(seq1)+(-0.75,0.1)$) (s7) {$S_7$}; 
\draw[dashed]($(seq1) + (-1,0.3)$)rectangle($(s4) + (0.5,-0.3)$);

\node at (3.5,0) (sel) {$B=\sigma_{p(1)=S_3~\land~p(2)=S_4}(S_7)$}; 
\draw[-{Triangle[width=10pt,length=8pt]}, line width=3pt](3,-0.5) -- (4, -0.5);

\node[draw] at (6.5,-0.8) (0) {$S_3$};
\node[draw] at (8,-0.8) (1) {$S_4$};

\node at ($(1)+(-1.75,0.7)$) (b) {$B$}; 
\draw[dashed]($(1) + (-2,0.9)$)rectangle($(1) + (0.5,-0.5)$);
\draw[->] (0) -- (1);
\end{tikzpicture}
\caption{Reducing ${S_7=\protect\SEQ(S_3,S_4)}$ into ${B=\sigma_{p(1)=S_3~\land~p(2)=S_4}(S_7)=\{\langle S_3,S_4 \rangle\}}$.}
\label{fig:reduction-s7-b}
\end{figure*}

In the next level of the composition structure, we define the second-order parallel space $S_8=\PAR(A,B)$ in terms of the above reductions, and we reduce it into ${C=\sigma_{p(A)=2~\land~p(B)=1}(S_8)}$. This is illustrated in Fig.~\ref{fig:reduction-s8-c}.\footnote{For clarity, we do not show the internal structure of the spaces $A$ and $B$.}

\begin{figure}[!h]
\center
\begin{tikzpicture}[level distance=1cm,
  level 1/.style={sibling distance=1cm},
  level 2/.style={sibling distance=1cm}]
  \node(par1)[PAR,scale=1]{}
    child {node(A)[draw,dashed] {$A$}}
	child {node(B)[draw,dashed] {$B$}};

\node at ($(par1)+(-0.75,0.1)$) (s8) {$S_8$}; 
\draw[dashed]($(par1) + (-1,0.3)$)rectangle($(B) + (0.5,-0.3)$);

\node at (3.5,0) (sel) {$C=\sigma_{p(A)=2~\land~p(B)=1}(S_8)$}; 
\draw[-{Triangle[width=10pt,length=8pt]}, line width=3pt](3,-0.5) -- (4, -0.5);

\node[circle,draw=black,inner sep=0pt,minimum size=5pt] at (7.7,0.1) (fork) {};
\node[draw,dashed] at (6.7,-0.8) (i1) {$A$};
\node[draw,dashed] at (7.7,-0.8) (i2) {$A$};
\node[draw,dashed] at (8.7,-0.8) (i3) {$B$};
\node[circle,fill=black,inner sep=0pt,minimum size=5pt] at (7.7,-1.8) (join) {};

\node at ($(fork)+(-1.25,0)$) (C) {$C$}; 
\draw[dashed]($(i1) + (-0.5,1.3)$)rectangle($(i3) + (0.5,-1.2)$);

\draw[->] (fork) -- (i1); \draw[->] (i1) -- (join);
\draw[->] (fork) -- (i2); \draw[->] (i2) -- (join);
\draw[->] (fork) -- (i3); \draw[->] (i3) -- (join);

\end{tikzpicture}
\caption{Reducing ${S_8=\protect\PAR(A,B)}$ into ${C=\sigma_{p(A)=2~\land~p(B)=1}(S_8)=\{\{(A,2),(B,1)\}\}}$.}
\label{fig:reduction-s8-c}
\end{figure}

Finally, we construct the third-order sequential space ${S_9=\SEQ(C,S_5)}$ and reduce it into $D=\sigma_{p(1)=C~\land~p(2)=S_5}(S_9)$. Diagramatically, we have the Fig.~\ref{fig:reduction-s9-d}.\footnote{For clarity, we do not show the internal structure of the space $C$.}

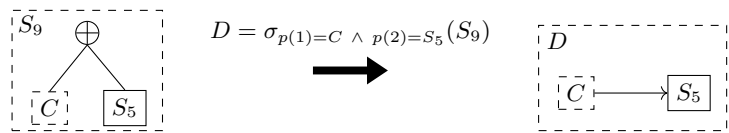
\begin{figure}[!h]
\center
\begin{tikzpicture}[level distance=1cm,
  level 1/.style={sibling distance=1cm},
  level 2/.style={sibling distance=1cm}]
  \node(seq1)[SEQ,scale=1]{}
    child {node(C)[draw,dashed] {$C$}}
	child {node(s5)[draw] {$S_5$}};

\node at ($(seq1)+(-0.75,0.1)$) (s9) {$S_9$}; 
\draw[dashed]($(seq1) + (-1,0.3)$)rectangle($(s4) + (0.5,-0.3)$);

\node at (3.5,0) (sel) {$D=\sigma_{p(1)=C~\land~p(2)=S_5}(S_9)$}; 
\draw[-{Triangle[width=10pt,length=8pt]}, line width=3pt](3,-0.5) -- (4, -0.5);

\node[draw,dashed] at (6.5,-0.8) (0) {$C$};
\node[draw] at (8,-0.8) (1) {$S_5$};

\node at ($(1)+(-1.75,0.7)$) (D) {$D$}; 
\draw[dashed]($(1) + (-2,0.9)$)rectangle($(1) + (0.5,-0.5)$);

\draw[->] (0) -- (1);
\end{tikzpicture}
\caption{Reducing ${S_9=\protect\SEQ(C,S_5)}$ into ${D=\sigma_{p(1)=C~\land~p(2)=S_5}(S_9)=\{\langle C,S_5 \rangle\}}$.}
\label{fig:reduction-s9-d}
\end{figure}

All the above operations can algebraically be expressed as shown in Fig.~\ref{fig:complete-example}.

\begin{figure}[!h]
\begin{center}
\begin{tabular}{ |m{10em} m{22em}| } 
\hline
$S_6 = \PAR(S_1,S_2)$ & $A = \sigma_{p(S_1)=1~\land~p(S_2)=1}(S_6) = \{\{(S_1,1),(S_2,1)\}\}$ \\
$S_7 = \SEQ(S_3,S_4)$ & $B = \sigma_{p(1)=S_3~\land~p(2)=S_4}(S_7) = \{\langle S_3,S_4 \rangle\}$ \\
$S_8 = \PAR(A,B)$ & $C = \sigma_{p(A)=2~\land~p(B)=1}(S_8) = \{\{(A,2),(B,1)\}\}$ \\
$S_9 = \SEQ(C,S_5)$ & $D = \sigma_{p(1)=C~\land~p(2)=S_5}(S_9) = \{\langle C,S_5 \rangle\}$ \\
\hline
\end{tabular} 
\end{center}
\caption{Operations to algebraically construct the third-order space $D$, where $S_j \in \mathbb{P}$ for all $1 \leq i \leq 5$ and $S_j \in \mathbb{C}$ for all $6 \leq j \leq 9$.}
\label{fig:complete-example}
\end{figure}

Although our example reduces spaces as composition is performed, it is important to note that it is also possible to define higher-order spaces without reducing to singleton sets. In fact, it is possible to define higher-order spaces that are never reduced, as described in Section~\ref{sec:semantics-higherorder}.

\section{Conclusions}
\label{sec:conclusions}

In this paper, we presented the semantics of an algebraic model for the inductive construction of computon spaces (i.e., sets of sequential and/or parallel constructs). Contrary to the traditional composition view, this model provides operators for the composition and for the reduction of computon spaces (not for individual programs). We limited ourselves to describe operators for sequencing, parallelisation and aggregation. In the future, we would like to investigate if it is possible to generate spaces of branchial and recursive computons. 

Furthermore, as program spaces are \emph{function spaces}, we plan to leverage the large body of theorems from the field of functional analysis. We also plan to provide computational interpretations for sequential and parallel computons. For example, a possible interpretation for $\langle S_1,S_2 \rangle$ is to simultaneously execute all the computons in $S_1$ and then all the computons in $S_2$. Similarly, a parallel computon $\{(S_1,1),(S_1,1)\}$ can be interpreted as the parallel execution of all the computons in both $S_1$ and $S_2$.

\bibliographystyle{splncs04}
\bibliography{refs}
\end{document}